\definecolor{jade}{rgb}{0.0, 0.66, 0.42}
\newcommand{\WW}{\textcolor{black}}
\newcommand{\innerprod}[2]{\left\langle #1,#2\right\rangle}
\DeclareMathOperator*{\argmin}{arg min}
\numberwithin{equation}{section}
\newtheorem{theorem}{Theorem}[section]
\newtheorem{proposition}[theorem]{Proposition}
\newtheorem{lemma}[theorem]{Lemma}
\theoremstyle{remark}
\definecolor{darkblue}{rgb}{0,0,0.7}
\newcommand{\Rnot}{\mathscr R_0}
\newcommand{\abs}[1]{\left| #1 \right|}
\newcommand{\pow}[2]{#1^{(#2)}}
\newcommand{\eps}{\varepsilon}
\newcommand{\omu}{(\mu^* - \mu)}
\newcommand{\R}{{\mathbb R }}
\newcommand{\donothing}[1]{{}}
\newcommand*{\defeq}{\mathrel{\vcenter{\baselineskip0.5ex \lineskiplimit0pt
                     \hbox{\scriptsize.}\hbox{\scriptsize.}}}%
     		     =}
\newcommand{\xRightarrow}[2][]{\ext@arrow 0359\Rightarrowfill@{#1}{#2}}
\let\OLDthebibliography\thebibliography
\renewcommand\thebibliography[1]{
  \OLDthebibliography{#1}
  \setlength{\parskip}{1pt}
  \setlength{\itemsep}{1pt plus 0.3ex}
}
\renewcommand{\H}{\mathcal H}
\title{Optimal Control of an SIR Model with Noncompliance as a Social Contagion}
\author{Chloe Ngo}
\author{Christian Parkinson}
\author{Weinan Wang}
\date{\today}
\begin{document}

\begin{abstract}
We propose and study a compartmental model for epidemiology with human behavioral effects. Specifically, our model incorporates governmental prevention measures aimed at lowering the disease infection rate, but we split the population into those who comply with the measures and those who do not comply and therefore do not receive the reduction in infectivity. We then allow the attitude of noncompliance to spread as a social contagion parallel to the disease. We derive the reproductive ratio for our model and provide stability analysis for the disease-free equilibria. We then propose an optimal control scenario wherein a policy-maker with access to control variables representing disease prevention mandates, treatment efforts, and educational campaigns aimed at encouraging compliance minimizes a cost functional incorporating several cost concerns. Via careful analysis of the control-to-state map, we are able to prove existence of optimal controls. Our proof applies to dynamics which can be nonlinear in the control variables and general cost functionals including the case of $L^1$ control costs. We numerically resolve optimal strategies using the sequential quadratic Hamiltonian method, a relatively new numerical method for optimal control which is easy to implement and has good convergence theory, as we demonstrate. We test our model in several parameter regimes with specific interest in observing how the policy-maker's optimal strategies depend on their particular preferences which are expressed via design of different cost functionals.


\end{abstract}

\maketitle

\section{Introduction}

In this paper, we are interested in incorporating human behavioral effects into a Susceptible-Infected-Recovered (SIR) type compartmental model for epidemic spread of a generic infectious disease. At the outset of an epidemic, governments will enact public health protocols meant to slow disease spread. However, adherence with non-pharmaceutical intervention (NPI) measures such as mask-wearing, shelter-at-home mandates, and social distancing is certainly imperfect. For example, Dolan provides an interesting account of opposition to prolonged public health protocols during the 1918 Flu epidemic and the actions of the so-called ``anti-mask league" \cite{Unmasking}, and in a study by the Pew Research Center from June 4-10, 2020 during the height of the COVID 19 pandemic, 16\% of $9654$ Americans surveyed reported that they ``hardly ever" or ``never" wear a mask \cite{Pew}. Likewise, the Angus Reid Institute reported on July 15, 2020 that 45\% of 1503 Canadians surveyed responded that they ``rarely" or ``never" wear masks, and among the 1202 participants who reported that they do not always wear masks, 15\% reported that their main reason for not wearing a mask is that they believe masks are ineffective and 11\% reported their main reason for not wearing a mask as ``no one else is wearing them" \cite{Angus}. Other studies reported similar levels of nonadherence with several types of NPIs \cite{NC1,NC3,NC2}. From reports like this, it appears that there will always be nontrivial subpopulations who do not comply with NPI measures, and given evidence that NPI measures have a positive effect on epidemic outcomes \cite{NPIEffectiveness}, accounting for this noncompliance is key to increasing fidelity in mathematical models of epidemiology. 

Accordingly, we propose a compartmental model for epidemiology wherein the population is split into those who comply with public health protocols (dubbed the \emph{compliant} population), and those who do not comply (dubbed the \emph{noncompliant} population). The disease is assumed to spread slower among the compliant population. Social contagion theory \cite{SCTOverview} posits that attitudes, opinions and behaviors can spread among a population in a manner similar to that of an infectious disease, and has been used to model prescription and illicit drug use \cite{SCdrugs0, SCdrugs1, SCdrugs2}, alcohol consumption \cite{SCalcohol}, governmental corruption \cite{SCcorruption}, adverse mental health conditions such as depression \cite{SCdepression}, and public sentiment \cite{SC_publicSentiment} in a variety of contexts including opinions regarding anti-smoking legislation \cite{SCantismoking} and, importantly, vaccine and treatment uptake during epidemics \cite{SCvaccines0,SCvaccines}. Given this, and the aforementioned study wherein people reported not wearing masks because ``no one else is wearing them" \cite{Angus}, we suppose that the attitude of noncompliance with NPIs spreads throughout the population as a social contagion. \WW{This provides a framework for studying epidemic control with dynamic human behavioral consideration, as opposed to treating behaviors as fixed or externally-prescribed.} We further imagine a policy-maker with several different options for epidemic control who aims to minimize a cost functional which trades off costs for increased infections and increased noncompliance with economic costs for increased use of controls. 

Incorporation and assessment of the effects of heterogeneous behavior  into epidemiological models is becoming increasingly popular (see, for example, \cite{HB0,HB1,HB2,HB3,HB4}, or \cite{HB5} where the authors propose a rather general framework similar to the modeling we use below, though with linear transfer between behavior classes). Likewise, there are several recent studies of optimal control for epidemiological models in different modeling scenarios, such as optimal treatment or vaccination \cite{amalraj2025backward, MR3181992, engida2023optimal,MR3012899, zhou2014optimal}, optimal quarantine under a variety of limitations \cite{OC_limitQuarantine,dagher}, and assessment of the effects of contact heterogeneity \cite{OC_contactHeterogeneity} or the use of different types of cost functionals \cite{OC_differentCostFunctionals}. \WW{However, very few studies include both optimal control for epidemiology and mechanistic modeling of human behavior.} One exception is \cite{BPW} which takes a more theoretical look at a system of controlled partial differential equations describing an epidemic. Other studies incorporate control and human behavior more implicitly, by, for example, allowing infection rates to depend on ambient dynamics under the assumption that people will change their behavior based on local disease effects \cite{OC_behavioralResponse} or media coverage \cite{OC_withMediaCoverage}. 

\subsection{Our Contribution} \WW{The sociological contribution of our work is threefold. First, we consider optimal control of a compartmental epidemiological model which incorporates human behavior in an explicit, mechanistic manner. From a modeling standpoint, this allows adherence with NPIs to be viewed as a dynamic epidemiological quantity as opposed to, for example, a fixed parameter. Second, we study the simultaneous control of disease and behavioral transmission, and specifically investigate the manner in which resources should be allocated toward stunting disease spread and/or efforts to increase compliance with NPIs given different policy-maker preferences. Third, we demonstrate that different policy-maker preferences lead to vastly different optimal control strategies. In certain cases, slowing disease spread via synergistic use of traditional public health measures and efforts to increase compliance with NPIs is optimal, whereas in cases when the policy-maker is too cost-conscious or noncompliance spreads too quickly, controls will simply be ineffective.}

On the mathematical side, we prove existence of optimal controls in a generic scenario wherein (i) controls appear in the model nonlinearly, (ii) control costs are not required to be smooth or strongly convex (e.g. $L^1$ control costs are allowable), and (iii) the admissible control set contains discontinuous control maps. With small tweaks, our existence proof will go through for many biological models where (i)-(iii) are physically preferable, but are commonly relaxed (e.g. by the restriction to control affine systems and use of $L^2$ control costs) for ease of analysis. We resolve optimal controls via the sequential quadratic Hamiltonian (SQH) method \cite{SQH}. This is relatively new algorithm for optimal control which is black-box free and has a reasonable convergence theory, as demonstrated below. This distinguishes SQH from open source large scale optimization packages (used, for example, in \cite{MR3181992,dagher}) which are explainable to experts, but often operate as black-boxes for the end user, and also from the popular forward-backward sweep method (used, for example, in \cite{amalraj2025backward,OC_withMediaCoverage,engida2023optimal}) which is easy to implement but does not have a robust convergence theory. Use of the SQH method in the epidemiology literature is scant. It was first employed in \cite{SQHepidemiology}, where the authors demonstrate that it vastly outperforms the basic forward-backward sweep method, and it also appears in the preprint \cite{ParkRoy} in the context of control for a partial differential equation model for epidemics. Thus, the mathematical contribution of this manuscript goes further than the particular model we analyze, but also includes a general existence framework for optimal controls that applies to biologically realistic models, and the proliferation of a numerical method for optimal control which may be preferable to those commonly used in the epidemiology literature.

The remainder of the manuscript is organized as follows. In section \ref{sec:OurModel}, we introduce our model, derive the reproductive ratio, and discuss stability of disease-free equilibria. In section \ref{sec:OptControl}, we introduce the optimal control problem that we will solve and prove existence of optimal controls for a reasonably generic cost functional. In particular, our existence proof allows for $L^1$ control cost and discontinuous controls, which are physically meaningful for our model. In section \ref{sec:numericalMethods}, we describe the application of the SQH method \cite{SQH} to our problem, and prove convergence of the algorithm. In section \ref{sec:results}, we discuss the results of simulations of our model in different parameter regimes. Specifically, we make some qualitative observations regarding the optimal control strategies that result from different policy-maker preferences. Finally, section \ref{sec:conclusion} includes a brief conclusion and discussion of avenues for future work.

\section{An SIR Model with Behavioral Effects: Disease-Free Equilibria and Reproductive Ratios} \label{sec:OurModel}

As with all ordinary and partial differential equation (O/PDE) based epidemiological models, we build off of the classical SIR model of Kermack and McKendrick \cite{OG}. We assume a constant birth rate $b >0$ into the susceptible population, and that deaths occur proportional to population sizes with natural rate $\delta > 0$. The infection spreads via mass action with rate $\beta >0$ and infectious individuals recover with rate $\gamma > 0$. This leads to the SIR model 
 \begin{equation}\label{eq:basicSIR}
\begin{split}
	\frac{dS}{dt} &= b-\beta SI - \delta S,\\
	\frac{dI}{dt} &= \beta SI - (\gamma +\delta) I, \\
    \frac{dR}{dt} &= \gamma I - \delta R. 
	\end{split}
 \end{equation} As mentioned above, we incorporate human behavior in a manner inspired by social contagion theory. Each population is divided into a compliant portion (denoted by $S,I,R$) and a noncompliant portion (denoted by $S^*,I^*,R^*$), and we assume that, parallel to the disease, noncompliance spreads via a second mass action term. With this assumption, we propose the following model:
 \begin{equation} \label{eq:SIRwithCompliance} \begin{split}
	\frac{dS}{dt} &= (1-\xi) b - \beta (1-\alpha(t))S((1-\alpha(t))I + I^*) -(\mu^*-\mu(t)) SN^* + \nu(t) S^* - \delta S,\\
	\frac{dI}{dt} &= \beta (1-\alpha(t))S((1-\alpha(t))I + I^*)- (\gamma+\eta(t)) I - (\mu^*-\mu(t)) IN^* + \nu(t) I^* - \delta I, \\
	\frac{dR}{dt} &= (\gamma+\eta(t)) I - (\mu^*-\mu(t)) RN^* +\nu(t) R^* - \delta R,\\
	\frac{dS^*}{dt}&= \xi b -\beta S^*((1-\alpha(t))I+I^*) + (\mu^*-\mu(t)) SN^* - \nu(t) S^* -\delta S^* ,\\
	\frac{dI^*}{dt} &= \beta S^*((1-\alpha(t))I+I^*) - \gamma I^* +(\mu^*-\mu(t)) IN^* -\nu(t) I^* - \delta I^*, \\
	\frac{dR^*}{dt} &= \gamma I^* + (\mu^*-\mu(t)) RN^* - \nu(t) R^* - \delta R^* .
	\end{split}
	\end{equation} 
Here the birth rate $b$, death rate $\delta$, infectivity rate $\beta$ and recovery rate $\gamma$ have the same interpretation as in \eqref{eq:basicSIR}. New parameters are as follows: $\xi \in [0,1]$ is the probability with which newly introduced members of the population are noncompliant, and hence the birth rate $b$ is split between the $S$ and $S^*$ equations with contributions $(1-\xi)b$ and $\xi b$ respectively. The total noncompliant population is defined by $N^* = S^*+I^*+R^*$ and is involved in new mass action terms which cause transfer from $S,I,R$ to $S^*,I^*,R^*$ with baseline infectivity rate $\mu^*$ (the assumption being that anytime a compliant individual comes in contact with a noncompliant individual, they have some chance of becoming noncompliant). The remaining parameters $\alpha, \eta,\mu,\nu$ represent the policy-maker's attempts to stop the spread of the disease and/or noncompliance. While we will typically suppress their $t$-dependence, we have written them as functions of $t$ in \eqref{eq:SIRwithCompliance} to emphasize that these will be treated as control variables that the policy-maker can adjust throughout the course of the epidemic. Their respective interpretations and admissible values are: \begin{itemize}
    \item[(1)] $\alpha(\cdot) \in [0,\overline \alpha]$ - reduction in infectivity or mixing for the compliant population. Here $\overline \alpha \in [0,1]$ is the maximum decrease in mixing / infectivity that can be achieved through the use of NPIs. Note, $\alpha(t)$ appears in the mass action terms representing spread of infections; in these terms, the compliant subpopulations $S,I$ are multiplied by $(1-\alpha(t))$, while the noncompliant subpopulations $S^*,I^*$ are not. 
    \item[(2)] $\eta(\cdot) \in [0,\overline \eta]$ - increased recovery rate among compliant infectious population due to treatment seeking. Here $\overline \eta$ is the maximum achievable increase in recovery rate for the disease due to treatment.
    \item[(3)] $\mu(\cdot) \in [0,\overline \mu]$ - reduction in spread of noncompliance due to public health information initiatives. Here $\overline \mu \in [0,\mu^*]$ is the maximum achievable decrease in the spread rate for noncompliance.
    \item[(4)] $\nu(\cdot) \in [0,\overline \nu]$ - rate of recovery from noncompliance due to educational campaigns aimed at increasing compliance with NPIs. Here $\overline \nu$ is the maximum achievable recovery rate from noncompliance. 
    \end{itemize} 

    \WW{Using these controls, our model distinguishes between control efforts aimed at slowing disease spread and those aimed at increasing adherence with NPIs.}
    
    We define $U_{\text{ad}} = [0,\overline \alpha] \times [0,\overline \eta] \times [0,\mu^*] \times [0,\overline \nu]$ to be our set of admissible control actions. The uncontrolled scenario would be $\alpha(\cdot), \eta(\cdot),\mu(\cdot),\nu(\cdot) \equiv 0$, and the maximally controlled scenario would have each control variable identically equal to its maximum value. Note that we do not consider vaccination, which can be modeled as either an ordinary or ``singular" control variable as discussed in \cite{MR3012899}, opting instead to focus on the early stages of infection when vaccination is unavailable and NPIs play a larger role. In this way, our model is similar to the ones analyzed in \cite{BPB} in the ODE setting, \cite{PW3} in the stochastic ODE setting, and \cite{BPW,PW} in the PDE setting. However, \cite{BPB,PW,PW3} do not consider the optimal control problem, and the work in \cite{BPW} does not incorporate treatment as a control, and is also more focused on analytic results regarding existence and characterization of solutions to a PDE-constrained optimization problem. Here we are interested in empirically observing features of the optimal control maps in different parameter regimes. 

In the ensuing section, we will introduce a cost functional for the policy-maker, and evaluate optimal control strategies. Before that, we cover some standard considerations regarding compartmental epidemiological models; namely, nonnegativity of solutions, derivation of the reproductive ratio, and stability of disease-free equilibria. We note that this work is already contained in \cite{PW3}, except in that manuscript $\eta = 0$. Because $\eta > 0$ causes an asymmetry in the way $I$ and $I^*$ are treated, the derivations here are a bit messier, though no more difficult. Thus to avoid duplication, we will cover them tersely and refer the reader to \cite{PW3} for proofs and expanded reasoning.

\begin{proposition} \label{prop:exist}
    Assuming nonnegative initial conditions and $(\alpha,\eta,\mu,\nu) \in L^\infty([0,T]; U_{\text{ad}})$, \eqref{eq:SIRwithCompliance} admits a unique, nonnegative, a.e. solution on the time interval $[0,T].$
\end{proposition}

The existence and uniqueness follow directly from Caratheodory's theorem \cite[Ch. 1.6]{Roub}. The preservation of nonnegativity in proposition~\ref{prop:exist} follows since $\frac{dX}{dt}|_{X=0} \ge 0$ whenever all components are nonnegative, for each $X \in \{S,I,R,S^*,I^*,R^*\}.$ We also note that there is a total population bound. Indeed, defining $N_{\text{total}} = S+I+R+S^*+I^*+R^*$, we see \begin{equation} \label{eq:TOTALPOP}N_{\text{total}}' = b-\delta N_{\text{total}} \,\,\,\,\, \implies \,\,\,\,\, N_{\text{total}}(t) = \frac{b}{\delta} + \bigg(N_{\text{total}}(0)-\frac b \delta\bigg)e^{-\delta t} \le \max\left\{N_{\text{total}}(0),\frac b \delta\right\}\end{equation} We can assume without loss of generality that $N_{\text{total}}(0) = 1$, whereupon alternately setting $\tfrac b \delta \ge 1$ or $\tfrac b\delta < 1$ would correspond to situations where the population is increasing or where the population is strictly decreasing. 

For the remainder of this section, we treat $\alpha,\eta,\mu,\nu$ as \emph{fixed, nonnegative, constant} values for the sake of finding disease free equilibrium (DFE) points. The DFEs for our system are equilibrium points $x^* = (s,i,r,s^*,i^*,r^*)$ with $i=i^*=r=r^* = 0$ and $s,s^* \ge 0$. There are two DFEs if $\xi = 0$ (meaning that all newly introduced members of the population are compliant). These are given by \begin{equation}
    \label{eq:DFEs12} 
    \begin{split}
        s_1 &= \tfrac b \delta, \\ s^*_1 &= 0,
    \end{split} \hspace{2cm} \text{ and } \hspace{2cm} \begin{split}
        s_2 &= \tfrac{\nu + \delta}{\mu^* - \mu}, \\ s^*_2 &= \tfrac b \delta - \tfrac{\nu + \delta}{\mu^* - \mu},
    \end{split} 
\end{equation} The first is always physically meaningful and represents the ``best case scenario" for the policy-maker wherein the whole population is compliant. The second is physically meaningful and distinct only if $\tfrac b \delta > \frac{\nu+\delta}{\mu^* - \mu}.$ Since $\mu \in [0,\mu^*]$, the denominator of $\frac{\nu+\delta}{\mu^* - \mu}$ is nonnegative, so in the case that $\mu \to \mu^*$, the second DFE is no longer meaningful.  

If instead $\xi \in (0,1]$ so that some portion of the newly introduced population is noncompliant, there is a unique DFE \begin{equation}  \label{eq:DFEs3}\begin{split}
    s_3 &= \tfrac 1 2\left( \tfrac{b}{\delta} + \tfrac{\delta + \nu}{\mu^* - \mu} - \sqrt{\left(\tfrac{b}{\delta} - \tfrac{\delta + \nu}{\mu^* - \mu} \right)^2 + \tfrac{4\xi b}{\mu^* - \mu} } \right), \\
    s^*_3&= \tfrac 1 2\left( \tfrac{b}{\delta} - \tfrac{\delta + \nu}{\mu^* - \mu} + \sqrt{\left(\tfrac{b}{\delta} - \tfrac{\delta + \nu}{\mu^* - \mu} \right)^2 + \tfrac{4\xi b}{\mu^* - \mu} } \right). \end{split}
\end{equation} In the limit as $\xi \to 0$, we find $(s_3,s_3^*) \to (s_1,s_1^*)$ when $\tfrac b \delta \le \frac{\nu+\delta}{\mu^* - \mu}$, and $(s_3,s_3^*) \to (s_2,s_2^*)$ when $\tfrac b \delta > \frac{\nu+\delta}{\mu^* - \mu}$. A special case of \eqref{eq:DFEs3} is the ``worst case scenario" for the policy-maker wherein $\xi = 1$  and $\nu = 0$, meaning that all newly introduced members of the population are noncompliant and noncompliance is a permanent state. In this case, the DFE is $s_3 = 0, s^*_3 = \tfrac b \delta$.

The \emph{reproductive ratio} $\mathscr R_0(s,s^*)$ corresponding to a DFE $x_{s,s^*} = (s,0,0,s^*,0,0)$ can be computed by the next generation matrix method \cite{NextGenMat1,NextGenMat2,NextGenMat3}: \begin{equation} \label{eq:Rnot} 
    \mathscr R_0(s,s^*) = \frac{\beta\Big[(1-\alpha)^2\Big(1+ \tfrac{\alpha\mu s^*/(1-\alpha)-\eta}{\gamma + \eta + \nu + \delta + (\mu^*-\mu)s^*}\Big)s + \Big(1-\tfrac{\alpha \nu}{\gamma + \eta + \nu + \delta + \omu s^*}\Big)s^*\Big] }{\gamma + \delta + \frac{\eta \nu}{\gamma + \eta + \nu + \delta + (\mu^*  -\mu)s^*}}.
\end{equation} This has a reasonably natural interpretation, with the main complicating factor being the difference in recovery rates between the compliant and noncompliant populations due to the compliant population seeking treatment represented by $\eta$. In the worst case scenario, where $\xi = 1, \nu = 0$ so that the DFE is $(s,s^*) = (0,\tfrac b \delta)$, we find $\mathscr R_0(0,\tfrac b \delta) = \frac{b}{\delta}\cdot \frac{\beta}{\gamma + \delta}.$ This is the reproductive ratio for the basic SIR model \eqref{eq:basicSIR}, which signifies that in this worst case scenario, the policy-maker's control efforts do nothing to stop the spread of the disease. In the best case scenario, $\xi = 0$ and $\tfrac b \delta \le \frac{\nu+\delta}{\mu^* - \mu}$ so that the only DFE is $(s,s^*) = (\tfrac b \delta, 0)$ and we find that $\mathscr R_0(\tfrac b \delta,0) = \tfrac b \delta \cdot\frac{\beta(1-\alpha)^2}{\gamma + \eta + \delta }.$ This decreases the reproductive ratio when compared with the basic SIR model in two ways: $(1-\alpha)^2$ in the numerator results from the reduction in infectivity among the compliant populations and the $\eta$ in the denominator results from the increased recovery rate among the compliant infected population.  In general, $\mathscr R_0(s,s^*)$ can then be viewed as a very complicated interpolation between the two points wherein the compliant susceptible population $s$ is receiving some reduction in infectivity signified by the $(1-\alpha)^2$, whereas the noncompliant susceptible population $s^*$ receives no such benefit. 

We make some natural observations regarding $\mathscr R_0(s,s^*)$. 

\begin{proposition} \label{prop:sanity}  Define $\mathscr R_0(s,s^*)$ as in \eqref{eq:Rnot} for nonnegative $s,s^*$.
\begin{itemize} 
\item[(i)] Holding other values constant, $\mathscr R_0(s,s^*)$ is decreasing in each of the control variables $\alpha, \eta, \mu,\nu$. 
    \item[(ii)] The function $s^* \mapsto \mathscr R_0(\tfrac b \delta - s^*, s^*)$ is increasing for $s^* \in [0,\tfrac b \delta]$. 
    \end{itemize}
\end{proposition}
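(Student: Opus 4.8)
The plan is to treat $\mathscr R_0$ as an explicit function of the six quantities $s,s^*,\alpha,\eta,\mu,\nu$ (all remaining parameters fixed and positive) and to check the sign of the appropriate partial derivative in each case. Throughout I abbreviate $G=\gamma+\delta$, $P=\gamma+\nu+\delta$, $k=\overline\mu-\mu\ge0$, and $m=(\overline\mu-\mu)s^*\ge0$, and I write $\mathscr R_0=N/D$ for the first representation in \eqref{eq:Rnot}, so that $N=\beta\big[(1-\alpha)(P+m)s+(P+\eta+m)s^*\big]$ and $D=G(P+m)+\eta P$, both manifestly positive on the nonnegative region. For part (i), the dependence on $\alpha$ and $\eta$ can be read off directly. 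Since $\alpha$ enters only through the factor $(1-\alpha)$ multiplying the nonnegative quantity $(P+m)s$ in $N$, while $D$ is independent of $\alpha$, increasing $\alpha$ decreases $N$ and fixes $D$. For $\eta$ I would use the second representation in \eqref{eq:Rnot}: with $A=\gamma+\nu+\delta+m$, the numerator factor $1-\tfrac{\eta}{A+\eta}=\tfrac{A}{A+\eta}$ is decreasing in $\eta$ and the denominator term $\tfrac{\eta\nu}{A+\eta}$ is increasing in $\eta$, so a nonnegative numerator over a positive increasing denominator decreases.

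The controls $\mu$ and $\nu$ require a short quotient-rule computation, but in both cases the numerator of the derivative collapses to a single-signed expression. For $\mu$ it is cleanest to differentiate with respect to $m$ (with $s,s^*$ fixed, $N$ and $D$ are affine in $m$): after the common $D$-terms cancel, the numerator $N'(m)D-N(m)D'(m)$ simplifies to $\beta\eta\big[(1-\alpha)(G+\nu)s+\nu s^*\big]\ge0$, so $\mathscr R_0$ increases in $m$; since $m=(\overline\mu-\mu)s^*$ is nonincreasing in $\mu$ (as $s^*\ge0$), $\mathscr R_0$ decreases in $\mu$. For $\nu$ the same mechanism applies (here $N,D$ are affine in $\nu$), and the derivative numerator reduces to $-\beta\eta\big[(1-\alpha)ms+(\gamma+\eta+\delta+m)s^*\big]\le0$. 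One could organize all four cases uniformly by writing $\mathscr R_0=w^\top V^{-1}u$ with $w=(1,1)^\top$, $u=(\beta(1-\alpha)s,\beta s^*)^\top$, using $V^{-1}\ge0$ entrywise and $\partial_\theta(w^\top V^{-1}u)=-(V^{-\top}w)^\top(\partial_\theta V)(V^{-1}u)$, but the direct computation above is self-contained.

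For part (ii) I would substitute $s=\tfrac b\delta-s^*$ and study the resulting single-variable function on $[0,\tfrac b\delta]$, where $s\ge0$. Writing $T=\tfrac b\delta$ and expanding, the key observation is that $N/\beta=(1-\alpha)(P+m)(T-s^*)+(P+\eta+m)s^*$ is a \emph{quadratic} in $s^*$ whose $(s^*)^2$ coefficient is the nonnegative number $k\alpha$ (the two quadratic contributions nearly cancel), while $D$ is affine in $s^*$ with nonnegative slope $Gk$. Writing $N/\beta=a(s^*)^2+bs^*+c$ and $D=p\,s^*+q$ with $a=k\alpha\ge0$, $p=Gk\ge0$, $q=P(G+\eta)\ge0$, the numerator of $\tfrac{d}{ds^*}\mathscr R_0$ equals $ap(s^*)^2+2aq\,s^*+(bq-cp)$. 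Since $a,p,q\ge0$, the first two terms are nonnegative on $[0,T]$, and the entire claim reduces to the single scalar inequality $bq-cp\ge0$.

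The main obstacle is precisely this last inequality, because $b=(1-\alpha)(kT-P)+(P+\eta)$ carries the factor $(1-\alpha)$ against an indefinite coefficient. I expect to handle it by factoring $bq-cp=P\{(1-\alpha)\Lambda+M\}$ with $\Lambda=\eta kT-P(G+\eta)$ and $M=(G+\eta)(P+\eta)$, then splitting on the sign of $\Lambda$: if $\Lambda\ge0$ both terms are nonnegative, while if $\Lambda<0$ then $0\le1-\alpha\le1$ gives $(1-\alpha)\Lambda\ge\Lambda$, whence $(1-\alpha)\Lambda+M\ge\Lambda+M=\eta(kT+G+\eta)\ge0$. Either way $bq-cp\ge0$, so the derivative numerator is nonnegative on $[0,\tfrac b\delta]$ and the map is increasing, completing the proof.
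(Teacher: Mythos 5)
Your proof is correct, and while it reaches the same conclusions it is organized differently from the paper's argument. The paper's proof of (i) for $\eta$ and $\nu$ rests on the elementary lemma that $z\mapsto\frac{c_1+c_2z}{c_3+c_4z}$ is monotone according to the sign of $c_2c_3-c_1c_4$, and reduces each case to a term-by-term comparison of two displayed inequalities; your direct quotient-rule computations (e.g.\ $N'D-ND'=\beta\eta[(1-\alpha)(\gamma+\nu+\delta)s+\nu s^*]$ for the $m$-derivative) are the same mechanism made explicit, and your treatment of $\eta$ via the second representation (decreasing nonnegative numerator over increasing positive denominator) is arguably cleaner than the paper's. The substantive difference is in (ii): the paper substitutes $s=\tfrac b\delta-s^*$ and displays $\mathscr R_0(\tfrac b\delta-s^*,s^*)$ as a ratio of two \emph{affine} functions of $s^*$, then applies the same linear-fractional lemma; but that displayed formula has no $(1-\alpha)$ factor in the numerator, i.e.\ it is only the $\alpha=0$ case. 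For general $\alpha\in[0,1]$ the numerator is genuinely quadratic in $s^*$ with leading coefficient $\alpha(\overline\mu-\mu)\ge 0$, exactly as you found, and your splitting of the derivative numerator into $ap(s^*)^2+2aqs^*+(bq-cp)$ together with the case analysis on the sign of $\Lambda=\eta(\overline\mu-\mu)\tfrac b\delta-(\gamma+\nu+\delta)(\gamma+\delta+\eta)$ is what is needed to close the argument in full generality. So your proof not only verifies the statement but repairs (or at least completes) the paper's proof of part (ii); the paper's approach buys brevity via the single monotone-fraction lemma, while yours buys uniform explicitness and covers the $\alpha>0$ case that the paper's displayed computation silently drops.
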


\noindent{\bf Remark.} This proposition is a sanity check to verify that our model behaves as expected. The two points can be interpreted as follows: (i) more control leads to a lower rate of disease spread, and (ii) when we are at the equilibrium population level of $\tfrac b \delta$, more noncompliance leads to a higher rate of disease spread. We omit the proof because it is elementary, if tedious. It follows most readily from the observation that the function $z\mapsto \frac{c_1+c_2z}{c_3+c_4z}$ is increasing for $z\ge 0$ when $c_2c_3 \ge c_1c_4$, and decreasing for $z\ge 0$ when $c_2c_3\le c_1c_4$. Clearing the nested denominators, one can write $\mathscr R_0(s,s^*)$ in this form with $\alpha,\eta,\mu,\nu$ alternately playing the role of $z$, and verify the correct inequality. \\

To conclude this section, we state stability results for our DFE.

\begin{theorem} \label{thm:DFEStability}
    Considering the DFE values given in \eqref{eq:DFEs12}, \eqref{eq:DFEs3} and $\Rnot(s,s^*)$ as defined in \eqref{eq:Rnot}, we have the following. \begin{itemize}
        \item[(i)] If $\xi = 0$ and $\tfrac b \delta\cdot \tfrac{\mu^*-\mu}{\nu + \delta} < 1$, then $(s_1,s_1^*) = (\tfrac b \delta,0)$ is asymptotically stable when $\Rnot(\tfrac b \delta, 0) < 1$ and unstable when  $\Rnot(\tfrac b \delta, 0) > 1$. 
        \item[(ii)] If $\xi = 0$ and $\tfrac b \delta\cdot \tfrac{\mu^*-\mu}{\nu + \delta} > 1$, then $(s_2,s_2^*) = (\tfrac{\nu+\delta}{\mu^* - \mu},\tfrac b \delta-\tfrac{\nu+\delta}{\mu^* - \mu})$ is asymptotically stable when $\Rnot(\tfrac{\nu+\delta}{\mu^* - \mu},\tfrac b \delta-\tfrac{\nu+\delta}{\mu^* - \mu}) < 1$ and unstable when $\Rnot(\tfrac{\nu+\delta}{\mu^* - \mu},\tfrac b \delta-\tfrac{\nu+\delta}{\mu^* - \mu}) > 1$
        \item[(iii)] If $\xi \in (0,1]$, then $(s_3,s_3^*)$ as defined in \eqref{eq:DFEs3} is asymptotically stable when $\Rnot(s_3,s_3^*) < 1$ and unstable when $\Rnot(s_3,s_3^*)>1$.
    \end{itemize}
\end{theorem}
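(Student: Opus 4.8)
The plan is to prove each case by linearizing the full six-dimensional system \eqref{eq:VecSys} about the relevant DFE and locating the spectrum of the Jacobian $J$: by the principle of linearized stability, a DFE is asymptotically stable when every eigenvalue of $J$ has negative real part and unstable when some eigenvalue has positive real part (the borderline case $\Rnot = 1$ being excluded from the statement). The structural observation that makes the problem tractable, and the reason it reduces to the $2\times 2$ matrices $F$ and $V$ of \eqref{eq:FANDV}, is that with the ordering $x = (I,I^*,S,R,S^*,R^*)$ in which the infected compartments come first, the block of $J$ recording the derivatives of the $I,I^*$ equations with respect to the non-infected variables $S,R,S^*,R^*$ vanishes at any DFE. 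Indeed, every appearance of $S,R,S^*,R^*$ in the $I,I^*$ equations sits inside a term bilinear in an infected and a non-infected quantity (for instance $\beta(1-\alpha)S(I+I^*)$ or $(\overline\mu-\mu)IN^*$), so differentiating in a non-infected variable leaves a factor of $I+I^*$ or of $I$, which is $0$ at the DFE. Hence $J$ is block lower triangular,
\[ J = \begin{bmatrix} F - V & 0 \\ J_{21} & J_{22} \end{bmatrix}, \]
where the top-left block is exactly $F-V$ with $F,V$ as in \eqref{eq:FANDV}, and $J_{22}$ is the $4\times4$ Jacobian of the disease-free subsystem in the variables $(S,R,S^*,R^*)$. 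The spectrum of $J$ is therefore the union of the spectra of $F-V$ and of $J_{22}$, and I can treat the two blocks separately.

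For the infected block I would invoke the standard next-generation result \cite{NextGenMat1,NextGenMat2,NextGenMat3}: since $F\ge 0$ and $V$ is a nonsingular M-matrix (its off-diagonal entries are nonpositive, $\det V>0$, and $V^{-1}\ge 0$), the spectral abscissa of $F-V$ is negative precisely when $\rho(FV^{-1})=\Rnot(s,s^*)<1$ and positive precisely when $\Rnot(s,s^*)>1$. Because $F$ has rank one this can also be checked directly: $FV^{-1}$ has eigenvalues $0$ and $\Rnot(s,s^*)$, so that $\det(F-V)=(1-\Rnot(s,s^*))\det V$, which together with the M-matrix structure fixes the sign of the relevant eigenvalue.

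Everything then hinges on showing that the disease-free block $J_{22}$ is stable at each of the three DFEs, independently of $\Rnot$, and this is the step I expect to be the long computation. Here I would use a second triangularization: reordering the non-infected variables as $(R,R^*,S,S^*)$, the linearized $R,R^*$ equations decouple from $S,S^*$ at the DFE, since $N^*=S^*+I^*+R^*$ does not involve $R$ and the remaining cross-derivatives again carry a vanishing factor of $R$. Thus $J_{22}$ splits into two $2\times 2$ blocks $C$ (governing $R,R^*$) and $A$ (governing $S,S^*$). A direct computation gives $\operatorname{tr}C<0$ and $\det C=\delta[(\overline\mu-\mu)s^*+\nu+\delta]>0$ unconditionally, while for $A$ one finds the clean identities $\det A=\delta P$ and $\operatorname{tr}A=-(P+\delta)$, where $P:=(\overline\mu-\mu)(s^*-s)+\nu+\delta$, so that $A$ is stable exactly when $P>0$.

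The final step is to verify $P>0$ at each DFE from the explicit formulas. At $(s_1,s_1^*)=(\tfrac b\delta,0)$ in case (i) one gets $P=\nu+\delta-(\overline\mu-\mu)\tfrac b\delta>0$ by the hypothesis $\tfrac b\delta<\tfrac{\nu+\delta}{\overline\mu-\mu}$; at $(s_2,s_2^*)$ of \eqref{eq:DFEs12} in case (ii) one gets $P=(\overline\mu-\mu)\tfrac b\delta-(\nu+\delta)>0$ by the reversed hypothesis; and in case (iii), substituting \eqref{eq:DFEs3} collapses $P$ to $(\overline\mu-\mu)$ times the square-root discriminant appearing in \eqref{eq:DFEs3}, which is strictly positive whenever $\xi>0$. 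With $J_{22}$ stable in all three cases, the sign of $\Rnot(s,s^*)-1$ alone decides the stability of $J$, which completes the argument. The main obstacle is purely the bookkeeping in this last case-by-case verification; the conceptual content lies entirely in the two successive block-triangular decompositions.
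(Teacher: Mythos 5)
Your proposal is correct and is essentially the paper's argument: the paper proves the local statement by citing \cite[Theorem 2]{NextGenMat3} and verifying its hypotheses (H1)--(H5), and your block-triangular linearization simply unwinds that citation --- your quantity $P=(\overline\mu-\mu)(s^*-s)+\nu+\delta$ is exactly the paper's eigenvalue $\lambda_4$ of $D\mathcal V^--D\mathcal V^+$ whose positivity is the content of (H5), your blocks $C$ and $A$ reproduce the remaining eigenvalues $\delta,\delta,\delta+\nu+(\overline\mu-\mu)s^*$, and the case-by-case verification of $P>0$ at the three DFEs is identical. The one substantive difference is that the paper's remark and appendix go further and upgrade local to \emph{global} asymptotic stability via an explicitly constructed Lyapunov function (carried out for case (i), with the other cases deferred to \cite{PW3}); your proof does not address this, but since the theorem as stated only asserts asymptotic stability, that is an addition the paper makes rather than a gap in your argument.
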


\noindent{\bf Remark.} In each case, that the condition on the reproductive ratio determines local asymptotic stability or instability follows directly from \cite[Theorem 2]{NextGenMat3} once the assumptions of that theorem are verified. However, the stability is actually global, in the sense that under the conditions of each statement, \emph{any} solution of \eqref{eq:SIRwithCompliance} will tend toward the given equilibrium point regardless of whether the initial condition is nearby the equilibrium point. This global version can be proven using appropriately constructed Lyapunov functions. We omit the proof because it is quite long and technical, and is contained in \cite{PW3}. \\

\WW{To conclude this section, we remind the reader of the biological interpretation of these results. We note that one can look solely at the compliance-noncompliance transmission by considering $N=S+I+R, N^*=S^*+I^*+R^*$ and when $\xi = 0$ as in theorem \ref{thm:DFEStability}(i),(ii) these quantities obey simple susceptible-infected-susceptible (SIS) type dynamics. The quantity $\tfrac b \delta\cdot \tfrac{\mu^*-\mu}{\nu + \delta}$ is then the reproductive ratio for the behavioral dynamics, and the equilibrium points $(s_1,s^*_1)$ and $(s_2,s^*_2)$ are the \emph{noncompliance free} and \emph{noncompliance endemic} equilibria respectively. It may also be possible to identify \emph{disease endemic} equilibria. Preliminary exploration in that direction seems to indicate that disease endemic equilibria correspond to positive roots of a quartic polynomial whose coefficients are rather lengthy expressions involving all the ambient parameters (though further simplification is likely possible). However, for a policy-maker who deems disease suppression important, guiding the system to a disease endemic equilibrium is not likely to be a desirable outcome, so from the perspective of fixed-horizon time optimal control it is less important to identify these. By contrast, guiding the system to a DFE is likely to be a desirable outcome, and theorem \ref{thm:DFEStability} could assist a policy-maker in doing so. }

\WW{
The preceding disease-free equilibrium (DFE) and reproductive-ratio ($\mathcal R_0$) calculation should be interpreted as an autonomous, constant-control threshold analysis. In the numerical section below the controls vary in
time, but at each time they move the instantaneous system toward a parameter regime in which
the relevant disease-free equilibrium is more or less stable. Thus, the purpose of the controls is not
only to reduce the instantaneous force of infection, but also to change the behavioral composition
of the population so that the disease-free regime becomes easier to maintain.
}

\WW{Proposition~\ref{prop:sanity} provides a first sensitivity analysis of this threshold quantity. Holding the disease-free composition $(s,s^*)$ fixed, the reproductive ratio decreases with each of the four controls $\alpha,\eta,\mu,\nu$. The interpretation is direct: NPIs reduce transmission among compliant individuals, treatment increases removal from the compliant infected class, and the two behavior-directed controls reduce the ability of noncompliance to create a reservoir in which disease-directed controls are less effective. Conversely, along the equilibrium population constraint $s+s^*=b/\delta$, the reproductive ratio is increasing in $s^*$. Hence the behavioral state of the susceptible population is itself an epidemiological risk factor: a larger noncompliant susceptible population increases the number of secondary infections generated by a typical infectious individual.}

\WW{This also shows how the influx parameter $\xi$ enters the threshold dynamics. For $\xi\in(0,1]$ and $\mu<\mu^*$, the noncompliant component of the disease-free equilibrium satisfies
\[
\frac{\partial s_3^*}{\partial \xi}
=\frac{b}{(\mu^*-\mu)\sqrt{\left(\frac{b}{\delta}-\frac{\delta+\nu}{\mu^*-\mu}\right)^2+\frac{4\xi b}{\mu^*-\mu}}}>0.
\]
Together with Proposition~\ref{prop:sanity}(ii), this implies that increasing the baseline fraction of newly introduced noncompliant individuals increases the reproductive ratio through its effect on the disease-free population composition. This is the analytic threshold counterpart of the numerical observation in Scenario~3 of section~\ref{sec:results}: once the baseline tendency toward noncompliance is sufficiently large, behavior-directed controls lose much of their leverage.}

\section{Optimal Control of our Model} \label{sec:OptControl}

In this section, we formulate an optimal control problem for \eqref{eq:SIRwithCompliance}. For the purpose of simulations in section~\ref{sec:results}, we will design a cost functional incorporating three desires for the policy-maker: (1) to minimize the total number of infections throughout the course of the epidemic, (2) to minimize the total amount of noncompliance throughout the epidemic, (3) to minimize control costs. However, for the theory in this section, we consider a reasonably general control formulation. \WW{Accordingly, this section is quite mathematically technical, as necessitated by the nonlinear incorporation of controls in the dynamics and the generic structure of the cost functional which allows for the case of $L^1$ control cost. These are integral facets of posing a biologically meaningful control problem for our model, but they require a much more careful analysis than, for example, control-affine dynamics and $L^2$ control cost.} 

For brevity, we write the control map as $u(\cdot) \defeq (\alpha(\cdot),\eta(\cdot),\mu(\cdot),\nu(\cdot))$ and the corresponding trajectory satisfying \eqref{eq:SIRwithCompliance} as $x \defeq (S,I,R,S^*,I^*,R^*).$ With this notation, we define a generic cost functional
\begin{equation} 
\label{eq:costFunc} J(x,u) = \int^T_0 (r(x(t)) + \ell(u(t))) dt
\end{equation} where $r:\R^6_{\ge 0} \to \R$ and $\ell:\R^4\to\R$. We state assumptions on $r,\ell$ in theorem~\ref{thm:optControlsExist}, but we make special note that we do not require strict or strong convexity. \WW{The exact choices of $r,\ell$ indicate the preferences of the policy-maker. For example, a more public-health oriented policy-maker would choose $r$ which assigns large values to states with higher $I$ and $I^*$, whereas an economically stingy policy-maker would choose $\ell$ to take large values for larger values of the controls.} From the above, our admissible control set is \begin{equation}\label{eq:controlSet}
U_{\text{ad}} = \{(\alpha,\eta,\mu,\nu) \in \R^4 \, : \, 0 \le \alpha \le \overline \alpha, \, 0 \le \eta \le \overline \eta, \, 0 \le \mu \le \overline \mu, \, 0 \le \nu \le \overline \nu\}
\end{equation} where $\overline \alpha,\overline \eta, \overline \mu, \overline \nu$ are fixed positive constants. We then consider control functions in \begin{equation} \label{eq:admissibleControlFuncs}
\mathscr U_{\text{ad}} = \{u(\cdot) \in BV([0,T]; U_{\text{ad}}) \, : \, TV(u_i) \le M, \,\, i = 1,\ldots,4\}
\end{equation} for some universal constant $M$. That is, each individual control $\alpha(\cdot),\eta(\cdot),\mu(\cdot),\nu(\cdot)$ is in $BV[0,T]$, the space of real-valued functions of bounded variation. To recall, this is the set of functions $z:[0,T]\to \R$ such that \begin{equation}\label{eq:TV}TV(z) \defeq \sup_{g \in C_c^1[0,T], \|g\|_\infty\le 1} \int^T_0 z(t)g'(t)dt < \infty.\end{equation}  We remark on this choice of control functions after the proof of theorem~\ref{thm:optControlsExist}. Our optimal control problem is then \begin{equation} \label{eq:optControlProb} 
\min_{u \in \mathscr U_{\text{ad}}} J(x,u) \,\,\,\, \text{ subject to } \,\, \frac{dx}{dt} = f(x,u), \,\,\,\, x(0) = x_0
\end{equation} where $f(x,u)$ is defined by the right-hand side of \eqref{eq:SIRwithCompliance} and $x_0 \in \R^6$ has nonnegative components whose sum is bounded by $\tfrac b \delta$. In this notation, we note that the unique a.e. solution $x$ of \eqref{eq:SIRwithCompliance} corresponding to a control map $u \in \mathscr U_{ad}$ can be characterized as the unique function $x \in (L^2[0,T])^6$ satisfying the initial condition and \begin{equation}
    \label{eq:weakForm}
    \int^T_0 \langle x(t),\phi'(t)\rangle dt = \int^T_0 \langle f(x(t),u(t)),\phi(t)\rangle dt, \,\,\,\,\, \forall \phi \in C^1_c([0,T];\R^6).
\end{equation}

Because our admissible control set $U_{\text{ad}}$ is compact, $\mathscr U_{\text{ad}} \subset (L^p[0,T])^4$ for any $p \in [1,\infty)$ or $p = \infty.$ For simplicity, we prove the following existence theorem using the direct method considering convergence in the $L^2$ topology, though with small tweaks this argument would go through with other choices.

\begin{theorem} \label{thm:optControlsExist}
    Assuming $r$ and $\ell$ are proper, convex, and lower-semicontinuous and $x_0$ has nonnegative components, there exists an optimal control $u^* \in \mathscr U_{\text{ad}}$ and corresponding trajectory $x^*$ which solve the minimization problem \eqref{eq:optControlProb}. 
\end{theorem}

\begin{proof}
    By proposition~\ref{prop:exist}, there is a well-defined control-to-state operator $\mathcal S : \mathscr U_{\text{ad}} \to (H^1[0,T])^6$ so that for $u \in \mathscr U_{\text{ad}}$, $\mathcal S(u) = x \in (H^1[0,T])^6$ is the corresponding solution of \eqref{eq:SIRwithCompliance}. Note that the basic existence theory will only guarantee that $x \in (L^2[0,T])^6$, but then by the population bound \eqref{eq:TOTALPOP}, the right hand side of each equation in \eqref{eq:SIRwithCompliance} defines a bounded, and hence $L^2[0,T]$, function so $x \in (H^1[0,T])^6$ by a bootstrapping argument. Furthermore $\mathcal S(u) = x$ remains uniformly bounded in $(H^1[0,T])^6$ among choices of $u \in \mathscr U_{\text{ad}}.$ We segment the proof into a few steps. In all following statements, we assume $\mathscr U_{\text{ad}}$ and $(H^1[0,T])^6$ are given the $L^2$ topology.\\

    {\bf Step 1: $\mathcal S$ is weakly closed.}\\

    Suppose $u^{(k)} \in \mathscr U_{\text{ad}}$ is a sequence such that $u^{(k)} \rightharpoonup u$ in $(L^2[0,T])^4$ and $\mathcal S(u^{(k)}) \rightharpoonup x$ in $(L^2[0,T])^6.$ Since $\mathscr U_{\text{ad}}$ is  convex, it suffices to prove that it is strongly closed in $L^2$, and this will imply that it is weakly closed \cite[Ch. 5.1]{Conway}. That $\mathscr U_{\text{ad}}$ is closed under strong $L^2$ limits follows because if $z^{(k)} \to z$ strongly in $L^2$, then H\"older's inequality gives $$\abs{\int^T_0z^{(k)}(t)g'(t)dt - \int^T_0 z(t)g'(t)dt} \le \|z^{(k)}-z\|_{L^2[0,T]}\|g'\|_{L^2[0,T]} \to 0$$ for any $g \in C_c^1[0,T]$, and this proves that $TV(z^{(k)}) \to TV(z)$ for total variation as defined in \eqref{eq:TV}. In particular, strong $L^2$ limits respect bounds on total variation. Hence $\mathscr U_{\text{ad}}$ is weakly closed and so $u \in \mathscr U_{\text{ad}}$.  Further, since $\{u^{(k)}\}$ is of uniformly bounded total variation, Helly's selection theorem \cite[Thm. 5.5]{EvansMeasure} allows us to pass to a subsequence along which $u^{(k)}\to u$ pointwise a.e. in $[0,T].$ (We do so without renaming the sequence, as we do whenever we mention a subsequence in the ensuing work.)  
    
    Now $\mathcal S(u^{(k)})$ is a bounded sequence in the Hilbert space $(H^1[0,T])^6$, so it has subsequence converging weakly to some $\overline x \in (H^1[0,T])^6$. By the compact embedding $H^1[0,T] \hookrightarrow L^2[0,T]$ \cite[Ch. 5.7]{Evans}, passing to a further subsequence, we can ensure strong convergence to $\overline x$ in the $(L^2[0,T])^6$ topology. By uniqueness of limits, we must have $\overline x = x$, and thus passing to an even further subsequence, we have $\mathcal S(u^{(k)}) \to x$ pointwise a.e.  
    
    Finally, for each $k$, using the weak formulation \eqref{eq:weakForm}, we have \begin{equation} \label{eq:gonnaTakeALimit}\int^T_0 \langle [\mathcal S(u^{(k)})](t),\phi'(t)\rangle dt = \int^T_0 \langle f([\mathcal S(u^{(k)})](t),u^{(k)}(t)),\phi(t)\rangle dt, \,\,\,\,\, \forall \phi \in C^1_c([0,T];\R^6). \end{equation} As $k\to \infty$, using strong $L^2$ convergence on the left hand side of \eqref{eq:gonnaTakeALimit}, and pointwise a.e. convergence plus the dominated convergence theorem on the right hand side of \eqref{eq:gonnaTakeALimit}, we see $$\int^T_0 \langle x(t),\phi'(t)\rangle dt = \int^T_0 \langle f(x(t),u(t)),\phi(t)\rangle dt, \,\,\,\,\, \forall \phi \in C^1_c([0,T];\R^6), $$ showing that $x$ is the solution of \eqref{eq:SIRwithCompliance} corresponding to $u$. That is $\mathcal S(u) = x$, and $\mathcal S$ is weakly closed, which completes Step 1.\\

    {\bf Step 2: $\mathcal S$ is weak-strong continuous.}\\

    Suppose $u^{(k)} \rightharpoonup u$ weakly in $L^2$. We need to prove that $\mathcal S(u^{(k)}) \to \mathcal S(u)$ in $L^2.$ Once again, $\{\mathcal S(u^{(k)}\}$ is a uniformly bounded sequence in $(H^1[0,T])^6$, and so any subsequence $\{\mathcal S(u^{(k_\ell)}\}$ is still uniformly bounded $(H^1[0,T])^6$. By the compact embedding $H^1[0,T]\hookrightarrow L^2[0,T]$, $\{\mathcal S(u^{(k_\ell)}\}$ has a further subsequence $\{S(u^{(k_{\ell_m})}\}$ converging strongly (hence also weakly) to some function $x \in (L^2[0,T])^6$. By weak closure, we must have $\mathcal S(u) = x$, so we have proven that every subsequence of $\{\mathcal S(u^{(k)}\}$ has a further subsequence converging strongly to $\mathcal S(u)$ in $(L^2[0,T])^6$. This implies that $\mathcal S(u^{(k)})\to \mathcal S(u)$ strongly in $(L^2[0,T])^6$ by Urysohn's subsequence principle \cite[Ch. 2.1.17]{TaoMeasure}. \\

    {\bf Step 3: The cost functional is lower-semicontinuous so \eqref{eq:optControlProb} has a solution.} \\

    Because the control-to-state operator is well-defined, \eqref{eq:optControlProb} is equivalent to the reduced problem \begin{equation}\label{eq:reduced} 
        \min_{u \in \mathscr U_{\text{ad}}} \mathcal J(u) \defeq J(\mathcal S(u),u).
    \end{equation} Since $r,\ell$ are proper, convex, and lower-semicontinuous, they are bounded below, and since $u, \mathcal S(u)$ are bounded functions, this implies $\mathcal J(u)$ is bounded from below and thus has a finite infimum $\mathcal J_{0}$. Let $\{u^{(k)}\}$ be a minimizing sequence of controls, so that $\mathcal J(u^{(k)}) \to \mathcal J_{0}.$ Since $\{u^{(k)}\}$ is a bounded sequence in the Hilbert space $(L^2[0,T])^4$, we can pass to a subsequence that converges weakly to some $u \in \mathscr U_{\text{ad}}$. By weak closure and weak-strong continuity, $\mathcal S(u^{(k)}) \to x \defeq \mathcal S(u)$ strongly in $(L^2[0,T])^6.$ Since $r,\ell$ are proper, convex, and lower-semicontinuous, the maps $$x \mapsto \int^T_0 r(x(t))dt, \,\,\,\,\,\,\,\, u \mapsto \int^T_0 \ell(u(t))dt $$ are weakly lower-semicontinuous, and thus \begin{align*} \mathcal J_0 \le \mathcal J(u) = J(x,u) &= \int^T_0 (r(x(t)) + \ell(u(t))) dt \\ &\le \liminf_{k\to\infty} \int^T_0 \Big((r([\mathcal S(u^{(k)})](t)) + \ell(u^{(k)}(t))\Big)dt = \liminf_{k\to\infty} \mathcal J(u^{(k)}) = \mathcal J_0. \end{align*} Thus $u \in U_{\text{ad}}$ and the associated solution $x$ of \eqref{eq:SIRwithCompliance} form a solution of \eqref{eq:optControlProb}.
\end{proof}

We prove one further regularity result for the control-to-state map. While this result was not necessary in the proof above, it will have bearing in section~\ref{sec:numericalMethods}. 

\begin{lemma} \label{lem:CSlip} The control-to-state operator $\mathcal S:\mathscr U_{\text{ad}} \to (H^1[0,T])^6$ is Lipschitz continuous when both the source and target spaces are given the $L^2$ topology. 
\end{lemma}

\begin{proof}
    Let $u,v \in \mathscr{U}_{\text{ad}}$ be two control maps, and let $x=\mathcal S(u)$ and $y = \mathcal S(v)$ be the associated solutions of \eqref{eq:SIRwithCompliance}. Then for any $t \in [0,T],$ $$\abs{x(t) - y(t)}_2 \le \int^t_0 \abs{f(x(s),u(s))-f(y(s),v(s))}_2 ds.$$ By proposition~\ref{prop:exist}, solutions of \eqref{eq:SIRwithCompliance} are uniformly bounded over choices of controls, and the controls themselves are uniformly bounded. Thus since $f(x,u)$ is smooth in both $x$ and $u$, it is, in particular, Lipschitz continuous, so $$\abs{x(t) - y(t)}_2 \le L\int^t_0 \Big(\abs{x(s)-y(s)}_2 + \abs{u(s)-v(s)}_2\Big)ds.$$ Thus Gr\"onwall's inequality gives $$\abs{x(t)-y(t)}_2 \le Le^T \int^t_0 \abs{u(s)-v(s)}_2ds \le Le^T \int^T_0 \abs{u(s)-v(s)}_2ds.$$ Squaring this and invoking Jensen's inequality yields $$\abs{x(t)-y(t)}^2_2 \le L^2Te^{2T} \|u-v\|^2_{(L^2[0,T])^4}$$ and integrating on $[0,T]$ shows that $$\|x-y\|^2_{(L^2[0,T])^6} \le (LTe^{T})^2 \|u-v\|^2_{(L^2[0,T])^4}$$ proving the claim.
\end{proof}

\noindent {\bf Remark.} A remark is in order regarding the specific choice of our control set $\mathscr U_{\text{ad}}$. If the dynamics are control affine, it is  enough to assume $u \in (L^2[0,T])^4$. However, our dynamics are not control affine. Thus in Step 1 of the proof, some higher regularity is required to ensure a stronger notion of convergence of $u^{(k)} \to u$, and allow us to pass to limit in \eqref{eq:gonnaTakeALimit}. The proof above would go through a bit more easily if we assume even more regularity--for example $u \in (H^1[0,T])^4$--but this is highly undesirable, because it would preclude the possibility of discontinuous control maps. Discontinuous controls arise naturally when enforcing $L^1$ control costs since these commonly lead to bang-bang control strategies (see \cite[Ch. 17]{Lenhart} or \cite{BangBang2,BangBang3}). \WW{For example, in the context of epidemiology, this corresponds to the realistic scenario of maximally enforcing NPIs or behavioral control strategies with fixed start and end dates, as opposed to slowly ramping up to maximal enforcement and then slowly abating the enforcement when control is no longer desired.} The assumption that the controls have uniformly bounded total variation allows for the use of Helly's selection theorem \cite[Thm. 5.5]{EvansMeasure} to pass to subsequence which converges pointwise a.e., while still maintaining strong closure of $\mathscr U_{\text{ad}}$ in the $L^2$ topology, and also allowing for discontinuous control maps. \\

Given existence of optimal controls, a necessary condition for optimality is provided by the Pontryagin minimum principle (PMP). Using state variables $x = (S,I,R,S^*,I^*,R^*)$, costate variables $p = (p_S,p_I,p_R,p_{S^*},p_{I^*},p_{R^*})$, and control variables $u = (\alpha,\eta,\mu,\nu),$ we define the Hamiltonian \begin{equation}
    \label{eq:Hamiltonian} \begin{split}
    \mathcal H(x,p,u) &= \langle p,f(x,u)\rangle + r(x) + \ell(u),
    \end{split}
\end{equation} where $r,\ell$ are as in \eqref{eq:costFunc} and $f(x,u)$ is defined by the right-hand side of \eqref{eq:SIRwithCompliance}. Then the PMP states that if $ x(\cdot), u(\cdot)$ are an optimal state and control pair for \eqref{eq:optControlProb}, then there is a corresponding costate trajectory $ p(\cdot)$ such that \begin{align}
\frac{d x}{dt} &= \hphantom{-}\nabla_p \H( x, p, u), \,\,\,\,\,\,\,\,\,\,  x(0) = x_0, \label{eq:stateDyn} \\
\frac{d p}{dt} &= - \nabla_x \H( x,  p,  u), \,\,\,\,\,\,\,\,\,\,  p(T) = 0, \label{eq:costateDyn}\\
 u(t) &= \argmin_{\hat u \in U_{\text{ad}}} \H( x(t), p(t),\hat u),  \,\,\,\,\, \text{ for all } \,\,\, t \in [0,T]. \label{eq:OptCondition}
\end{align} We direct the reader to several classical sources for proofs and discussions regarding the PMP; for example, \cite[\S5]{Athans}, \cite[\S4]{Liberzon}, \cite[\S1,2]{Pontryagin}, \cite[\S12]{CTTextbook}. We note that \eqref{eq:stateDyn} is exactly the original dynamics \eqref{eq:SIRwithCompliance}. We refer to \eqref{eq:costateDyn} as the \emph{costate dynamics}, and \eqref{eq:OptCondition} as the \emph{optimality condition}. The condition $p(T) = 0$ in \eqref{eq:costateDyn} is called the \emph{transversality condition} (discussed in the aforementioned sources), and takes this simplified form because we consider a fixed time, free end-point problem with no terminal cost. 

In the ensuing section, we describe numerically resolving optimal controls via the sequential quadratic Hamiltonian method \cite{SQH}. 

\section{Numerical Methods} \label{sec:numericalMethods}

In this section, we consider the numerical resolution of optimal control maps via the sequential quadratic Hamiltonian (SQH) method \cite{SQH}. Because this is a fairly new numerical method, and to make this manuscript as self-contained as possible, we discuss some basic facets of the method and prove convergence of the iteratively defined control maps. A full discussion of the method with worked examples can be found in \cite{SQH}. 

The SQH method is an iterative method for resolving control maps in a variety of scenarios involving optimal control of ordinary or partial differential equations. The basic strategy is to successively update the approximation of the control maps using quadratic perturbations to the true Hamiltonian \eqref{eq:Hamiltonian}. Specifically, for $\eps > 0$, define the auxiliary Hamiltonian \begin{equation} \label{eq:newControls}\mathcal H_{\eps}(x,p,u,\overline u) \defeq \mathcal H(x,p,\overline u) + \eps\abs{u-\overline u}^2_2.\end{equation} The quadratic term above introduces relaxation, so that solving $$u^{\text{new}}(t) = \argmin_{w\in U_{\text{ad}}} \mathcal H_\eps(x(t),p(t),u(t),w)$$ results in $u^{\text{new}}$ which balances minimizing the Hamiltonian with a penalty for straying too far from the currently employed control strategy $u$. 

Given this perturbed Hamiltonian, the full SQH algorithm is detailed in algorithm~\ref{alg:SQHmethod}. The basic strategy of the algorithm is to successively update $x, p$ and $u$ by solving the state and co-state equations and resolving controls via \eqref{eq:newControls}. One significant novelty is the inclusion of step (v) in the iteration, whereupon one checks the condition \begin{equation} \label{eq:check} J\left(x^{(k+1)},u^{(k+1)}\right)-J\left(x^{(k)},u^{(k)}\right)\le -\rho \, \tau.\end{equation} If this condition is met, it is determined that the new controls $u^{(k+1)}$ have achieved a sufficient decrease in the cost functional. In this case, the new controls $u^{(k+1)}$ are accepted, $\eps$ is decreased by setting $\eps \leftarrow \zeta \eps$ for some $\zeta \in (0,1)$, and the iteration continues with the updated auxiliary Hamiltonian. If the condition is not met, there was not sufficient decrease in the cost functional, so $\eps$ is increased by setting $\eps \leftarrow \lambda \eps$ for some $\lambda > 1$ and $u^{(k+1)}$ is recomputed before continuing. Here $\lambda,\zeta$ are tunable parameters, as is $\rho$ in \eqref{eq:check}, which determines the level to which $u^{(k+1)}$ must decrease the cost functional to be considered acceptable.  

%
%
\begin{algorithm}[t!]
\caption{Sequential Quadratic Hamiltonian Method for Resolving Optimal Control Maps}

\begin{algorithmic}
\State (1) Input:  initial approximation $u^0$, maximum number of iterations $k_{max}$, tolerance $\kappa >0$, and hyperparameters 
$\varepsilon >0$, $\lambda >1$, $\rho > 0$, and $\zeta\in\left(0,1\right)$. \\

\State (2) Set $k = 0$ and compute the solution $x^{(0)} = (S^{(0)},I^{(0)},R^{(0)},S^{*,(0)}, I^{*,(0)},R^{*,(0)})$ to \eqref{eq:SIRwithCompliance} with control $u=u^{(0)}$. \\

\State (3) Perform the following iteration.\\

\Repeat\\
    \State (i) Compute the costate dynamics $p^{(k)}$ obeying \eqref{eq:costateDyn} with $x=x^{(k)}$ and $u = u^{(k)}$.\\

   \State (ii) Compute $u^{(k+1)}$ via:
    $$u^{(k+1)}(t) = \argmin_{w \in U_{\text{ad}}}\, \mathcal H_\eps \left(x^{(k)}(t), p^{(k)}(t), u^{(k)}(t),w\right), \,\,\,\,\,\,\, t \in [0,T].$$    

    \State (iii) Compute the solution $x^{(k+1)}$ to the \eqref{eq:SIRwithCompliance} with control $u=u^{(k+1)}.$\\
    
    \State (iv) Compute $\tau=\|u^{(k+1)} -u^{(k)}\|^2_{(L^{2}[0,T])^4}$.\\

    \State (v) {\bf if:} $\left\{J\left(x^{(k+1)},u^{(k+1)}\right)-J\left(x^{(k)},u^{(k)}\right) \leq -
    \rho \, \tau \right\}$ 
    then 
    set $\eps \leftarrow \zeta \eps$ and continue to step (vi),\\
    \hspace{\algorithmicindent}\hspace{0.6cm}{\bf else if:} 
    $\left\{J\left(x^{(k+1)},u^{(k+1)}\right)-J\left(x^{(k)},u^{(k)}\right)> -\rho \, \tau\right\}$ then set $\eps \leftarrow \lambda\eps$ and go to step (ii). \\

    \State (vi) Set $k\leftarrow k+1.$\\

\Until{$k = k_{\text{max}}$ or $\tau < \kappa$}\\
\Return $(x^*,u^*)$ - the most recently computed state and control trajectories
\end{algorithmic}
\label{alg:SQHmethod}
\end{algorithm}

\WW{This is the key difference between SQH and the popular forward-backward sweep method (FBSM) \cite[Ch. 4]{Lenhart}. FBSM applies a simple fixed-point iteration to the system \eqref{eq:stateDyn}-\eqref{eq:OptCondition} and is exceedingly easy to implement, but lacks a robust convergence theory. SQH employs a similar forward-backward sweep methodology, but achieves a more stable and well-behaved iteration by using the quadratic perturbation to the Hamiltonian and then adjusting the perturbation coefficient $\eps$ to ensure sufficient decrease of the cost functional. This works by replacing the rigid application of the optimality condition \eqref{eq:OptCondition} with a much gentler application of a proximal step \cite[Ch. 6]{beck}, which allows us to prove that algorithm~\ref{alg:SQHmethod} does indeed terminate. We do so after establishing the following lemma. Due to the better theoretical properties and similar ease of implementation, we would argue that SQH is preferable for our problem, and indeed for more general problems in epidemic controls as seen in \cite{SQHepidemiology}.}

\begin{lemma}\label{lem:JHrelation}
    Assume that the running cost function $r$ from \eqref{eq:costFunc} has a Lipschitz continuous gradient. Let $u,v\in\mathscr U_{\text{ad}}$ be two control maps with associated solutions $x = \mathcal S(u)$ and $y = \mathcal S(v)$ of \eqref{eq:SIRwithCompliance}. Then there is a constant $c >0$ such that \begin{equation} \label{eq:B0} J(x,u) - J(y,v) \le \int^T_0 \Big (\mathcal H(x(t),p(t),u(t)) - \mathcal H(x(t),p(t),v(t))\Big)dt + c\|u-v\|^2_{(L^2[0,T])^4}\end{equation} where $p$ is the co-state trajectory corresponding to $(x,u)$.
\end{lemma}

\begin{proof}
    We compute, dropping dependence on $t$ for brevity, \begin{equation}\label{eq:B1} \begin{aligned}
        J(x,u)-J(y,v) &= \int^T_0 \Big( r(x)+\ell(u) - r(y) - \ell(v)\Big)dt \\
        &= \int^T_0 \Big( [\mathcal H(x,p,u) - \mathcal H(y,p,v)] + \innerprod{p}{f(y,v)-f(x,u)}\Big) dt \\
        &= \int^T_0 [\mathcal H(x,p,u) - \mathcal H(x,p,v)] dt \\ &\hspace{2cm}+ \int^T_0 \Big( [\mathcal H(x,p,v) - \mathcal H(y,p,v)]+\innerprod{p}{\tfrac d {dt}(y-x)}\Big)dt.
        \end{aligned}
    \end{equation} In the last equality of \eqref{eq:B1}, the first term on the right hand side is as it appears in \eqref{eq:B0}, so we work on the last line, which we call $I$. We see \begin{equation} \label{eq:B2}
        \begin{aligned}
            I &= \int^T_0 \Big( [\mathcal H(x,p,v) - \mathcal H(y,p,v)]+\innerprod{p}{\tfrac d {dt}(y-x)}\Big)dt \\ 
            &= \int^T_0 \left( \int^1_0 -\innerprod{\nabla_x \mathcal H(x+\lambda(y-x),p,v)}{y-x} d\lambda + \innerprod{p}{\tfrac d {dt}(y-x)}\right)  dt.
        \end{aligned}
    \end{equation} Now integrating by parts to pass the derivative onto $p$, the boundary terms vanish since $x(0)=y(0)$ and $p(T) = 0$. So using \eqref{eq:costateDyn}, \eqref{eq:B2} becomes \begin{equation} \label{eq:B3}I = \int^T_0 \int^1_0 \innerprod{\nabla_x \mathcal H(x,p,u) - \nabla_x H(x+\lambda(y-x),p,v)}{y-x} d\lambda dt.\end{equation} Now $\nabla_x \mathcal H = (J_f(x,u))^Tp+\nabla r$ where $J_f$ is the Jacobian matrix of $f$. Since $f$ is smooth (and its arguments remain bounded) and since $\nabla r$ is Lipschitz by assumption, $\nabla_x\mathcal H$ is Lipschitz. Thus \eqref{eq:B3} implies \begin{equation} \label{eq:B4}\begin{aligned}I &\le L \int^T_0 \int^1_0 (\lambda \abs{x-y}^2_2 + \abs{u-v}_2\abs{x-y}_2 )d\lambda dt \\ &\le \frac L 2 \|x-y\|^2_{(L^2[0,T])^6} + L\|x-y\|_{(L^2[0,T])^6}\|u-v\|_{(L^2[0,T])^4}.\end{aligned}\end{equation}  Finally, lemma~\ref{lem:CSlip} tells us that the control-to-state map is Lipschitz, whereupon \eqref{eq:B4} yields $$I \le c\|u-v\|^2_{(L^2[0,T])^2}$$ and inserting this last bound into \eqref{eq:B1} concludes the proof. 
\end{proof}

\begin{theorem} \label{thm:theoremSQH1} Under the assumptions of lemma~\ref{lem:JHrelation}, let $\left(x^{(k)},u^{(k)}\right)$ and $\left(x^{(k+1)},u^{(k+1)}\right)$ be successive iterates generated by the SQH method (algorithm \ref{alg:SQHmethod}). Then there is $c > 0$ such that
\begin{equation} \label{eq:decreaseJ} J\left(x^{(k+1)},u^{(k+1)}\right) - J\left(x^{(k)},u^{(k)}\right) \le  -(\eps-c) \, \| u^{(k+1)} - u^{(k)} \|^{2}_{(L^{2}[0,T])^4}.\end{equation} for the current value of $\eps > 0$ in algorithm \ref{alg:SQHmethod}.
In particular, this implies \begin{equation}\label{eq:decreaseJ2}J\left(x^{(k+1)},u^{(k+1)}\right) - J\left(x^{(k)},u^{(k)}\right) \le - \rho \, \tau\end{equation} for $\eps \ge \rho+c$ and $\tau = \| u^{k+1} - u^k \|^{2}_{(L^{2}[0,T])^3}$. \end{theorem}

\begin{proof}
    In step (ii), we resolve $u^{(k+1)}$ such that $$\mathcal H_\eps \left(x^{(k)}(t), p^{(k)}(t), u^{(k)}(t),u^{(k+1)}(t)\right) \le \mathcal H_\eps \left(x^{(k)}(t), p^{(k)}(t), u^{(k)}(t),w\right),$$ for all $w \in U_{\text{ad}} \text{ and } t \in [0,T].$ In particular, choosing $w = u^{(k)}(t)$ at each $t \in [0,T]$ (and then dropping the argument $t$ for brevity), we see $$\mathcal H_\eps \left(x^{(k)}, p^{(k)}, u^{(k)},u^{(k+1)}\right) \le \mathcal H_\eps \left(x^{(k)}, p^{(k)}, u^{(k)},u^{(k)}\right) = \mathcal H\Big(x^{(k)},p^{(k)},u^{(k)}\Big).$$ Inserting the definition of $\mathcal H_\eps$, this yields $$\mathcal H\Big(x^{(k)},p^{(k)},u^{(k+1)}\Big) -\mathcal H\Big(x^{(k)},p^{(k)},u^{(k)}\Big) \le  -\eps \big\vert \pow u {k+1} - \pow u k\big\vert ^2_2.$$ Then from lemma~\ref{lem:JHrelation}, we have \begin{align*}
        J\Big(\pow x{k+1}, \pow u{k+1}\Big) - J\Big(\pow x k, \pow u k\Big) &\le \int^T_0 \Big(H\Big(x^{(k)},p^{(k)},u^{(k+1)}\Big) -\mathcal H\Big(x^{(k)},p^{(k)},u^{(k)}\Big) \Big)dt \\ &\hspace{3cm}+ c\|\pow u {k+1} - \pow u k\|^2_{(L^2[0,T])^4} \\
        &\le -(\eps - c) \|\pow u {k+1} - \pow u k\|^2_{(L^2[0,T])^4}
    \end{align*} as desired. 
\end{proof}

Recall, if sufficient decrease of the cost functional is not achieved, $\eps$ is successively increased. Theorem~\ref{thm:theoremSQH1} tells us that once $\eps \ge \rho + c$, we are guaranteed to achieve sufficient decrease of the cost functional. Given this, we prove that the numerical iterates do indeed converge. 

\begin{theorem} \label{thm:themremSQH2}
    The sequence of iterates $\{(\pow x k, \pow u k)\}$ created by the SQH algorithm satisfy the following. \begin{itemize}
        \item[(a)] $\{J\Big(\pow x k,\pow uk\Big)\}$ decreases in $k$, and thus converges to some $J^* \ge \min_{u\in\mathscr U_{\text{ad}}} J(x,u)$ 
        \item[(b)] $\displaystyle \|\pow u {k+1}-\pow u k\|_{(L^2[0,T])^4}, \,\|\pow x {k+1}-\pow x k\|_{(L^2[0,T])^6} \to 0$ as $k \to \infty$
    \end{itemize}  
\end{theorem}

\begin{proof}
    Monotonicity of $J(\pow x k, \pow u k)$ follows directly from \eqref{eq:decreaseJ2} following the observation above that eventually $\eps$ will be increased enough that $\eps \ge \rho +c$. Convergence then follows by the monotone convergence theorem, so (a) is proven. 

    For (b), we rearrange \eqref{eq:decreaseJ2} to see that when $\pow u{k+1}$ is resolved, $$\|\pow u{k+1}-\pow u k\|^2_{(L^2[0,T])^4} \le \frac{1}{\rho} \Big(J\Big(\pow x k, \pow u k\Big) - J\Big(\pow x {k+1},\pow u{k+1}\Big)\Big).$$ Thus we have the telescoping sum $$\sum^K_{k=0} \|\pow u{k+1}-\pow u k\|^2_{(L^2[0,T])^4} \le \frac 1 \rho \Big(J\Big(\pow x 0, \pow u 0\Big) - J\Big(\pow x {K+1},\pow u{K+1}\Big)\Big).$$ As $K\to \infty$ the right hand side above converges, so this shows that the left hand side is a convergent series of positive terms, implying that the summands go to zero. That $\|\pow x {k+1}-\pow x k\|_{(L^2[0,T])^6} \to 0$ as well follows by the Lipschitz continuity of the control-to-state map proven in lemma~\ref{lem:CSlip}.
\end{proof}

With this, we move on to simulations and discussion wherein we define a cost functional and discuss the optimal control strategies that result in the case of different policy-maker preferences.

\section{Results \& Discussion} \label{sec:results}

We implement algorithm \ref{alg:SQHmethod} in MATLAB and demonstrate the results in several different cases. Specifically, we consider the cost functional \eqref{eq:costFunc} with \begin{equation}
    \label{eq:rl}
    \begin{aligned}
        r(x) &= c_1(I+I^*) + c_2 (S^*+I^*+R^*), \\
        \ell(u) &= d_1(\abs{\alpha} +\abs{\eta}+\abs{\mu}+\abs{\nu}) + \frac{d_2}2(\alpha^2 + \eta^2+\mu^2+\nu^2).
    \end{aligned}
\end{equation} where $c_1,c_2,d_1,d_2$ are nonnegative weights that express the policy-maker's preferences. We are interested in observing the differing behavior of optimal control maps in response to changes in the policy-maker's preferences. That is, in each scenario below, we fix these weights and use the SQH method to resolve approximate optimal control maps. \WW{The hyperparameters used for the SQH algorithm are  $\kappa = 10^{-8}, \lambda = 1.1, \zeta = 0.9, \rho = 10^{-9}$ and an initial value of $\eps = 1$. We use a maximum iteration count of $k_{\text{max}}=10000$ but the algorithm terminated before reaching this count in all of our simulations (typically requiring on the order of a few hundred iterations).}

For all simulations we set $T = 100$, and for the sake of numerically resolving the solutions of \eqref{eq:SIRwithCompliance} and \eqref{eq:costateDyn}, we use an explicit Euler scheme with uniform discretization of $[0,T]$ with step size $\Delta t = 0.1$. 

What remains is to specify initial conditions and the parameters $b,\delta,\beta,\gamma,\overline \eta,\xi,\mu^*,\overline \mu,\overline \nu$. To enhance readability, we include a description of each of these parameters and the values they take in each scenario. In all scenarios, we take $b = \delta$, and initial condition $(S_0,I_0,R_0,S^*_0,I^*_0,R^*_0) = (0.69,0.01,0,0.29,0.01,0)$, indicating that at the outset of the epidemic, 2\% of the population is infected, and 30\% of the population is noncompliant. Because $b =\delta$ and the initial condition has total population $1$, by \eqref{eq:TOTALPOP}, the total population will be $1$ for all time, so individual compartments can be interpreted as percentages of the total population.  

\WW{Before displaying results, we briefly discuss the bearing of the controls. Recall that the cost \eqref{eq:costFunc} for the policy-maker is split into three parts: (C1) cost associated with infections (weighted by the constant $c_1$), (C2) costs associated with noncompliance (weighted by the constant $c_2$) and (C3) cost associated with controls (weighted by constants $d_1,d_2$). By proposition \ref{prop:sanity}, any use of controls will help decrease cost (C1) by slowing the spread of infections. This is the only effect that $\alpha$ and $\eta$ can have. The controls $\mu$ and $\nu$ obliquely lower infection rate and thus reduce cost (C1), but more explicitly decrease cost (C2) by decreasing the noncompliant population. Of course, any use of control increases cost (C3), meaning that the policy-maker has a trade-off to consider.} We mention specifically that the absolute size of the weights is not important, but rather their size relative to each other. Accordingly, in much of the below we fix the control cost weights $d_1 = d_2 = 0.1$ and vary the cost weights $c_1,c_2$ associated with total infections and total amount of noncompliance, respectively. In certain cases, we would like to ``turn off" individual controls; to do so, one can simply set the upper bound for that control to zero (this could equivalently be accomplished by sending the control's cost weight to infinity).  Lastly, as baseline values for infection and noncompliance cost weights, we choose $c_1 = 1$ and $c_2 = 0.1$. This causes the cost due to infection and the cost due to noncompliance to be roughly equal in the uncontrolled case. The reason that $c_2$ must be significantly smaller is that with no controls, noncompliance is a permanent state, whereas infection is not, which means that noncompliance has the potential to contribute much more to the total cost. Finally, we emphasize that for each separate plot of controls below, we are changing certain parameters and resolving optimal controls again using the SQH method (i.e., none of these plots contain synthetically designed controls; all of the controls presented are solutions of an optimal control problem with different parameters and/or cost weights). \WW{The MATLAB code which runs these simulations is available at the link below.\footnote{\href{https://github.com/chparkinson/Opt_Control_SIR_With_Noncompliance_As_Social_Contagion}{github.com/chparkinson/Opt\_Control\_SIR\_With\_Noncompliance\_As\_Social\_Contagion}}}

\begin{figure}[t!]
\begin{tabular}{|lcrrr|}
\hline
    Parameter & Description & Scen. 1 & Scen. 2 & Scen. 3\\ \hline
    $b$ & natural birth rate & 0.01 & 0.01 & 0.01\\
    $\delta$ & natural death rate & 0.01 & 0.01 & 0.01\\
    $\beta$ & disease infection rate & 0.4 & 0.6 & 0.4\\
    $\gamma$ & disease recovery rate & 0.2 & 0.2 & 0.2\\
    $\overline \alpha$ & max. achievable decrease in mixing due to compliance & 0.5 & 0.5 & 0.5\\
    $\overline \eta$ & maximum achievable treatment efficacy & 0.1 & 0.1 &0.1\\
    $\xi$ & noncompliant portion of new population & 0 & 0.3 & var.\\
    $\mu^*$ & baseline infection rate for noncompliance & 0.1 & 0.2 & 0.3\\
    $\overline \mu$ & max. achievable decrease in inf. rate of noncomp. & 0.05 & 0.1 & 0.15\\
    $\overline \nu$ & maximum achievable rehab. rate for noncompliance & 0.1 & 0.1 & 0.1\\
    $c_1$ & cost weight on infections & var. & var. & 1 \\
    $c_2$ & cost weight on noncompliance & var. & 0.01 & 0.05 \\
    $d_1$ & $L^1$ cost weight on controls & 0.1 & 0.1 & 0.1\\
    $d_2$ & $L^2$ cost weight on controls & var. & 0.1 & 0.1 \\
    \hline
\end{tabular}
\caption{Table of parameters for each scenario. Recall in particular that $\overline \alpha, \overline \eta, \overline \mu, \overline \nu$ are upper bounds for the control variables $\alpha,\eta,\mu,\nu$, which represent decrease in disease spread rate due to NPIs, increase in recovery rate due to seeking treatment, decrease in spread rate of noncompliance due to public health initiatives, and recovery rate for noncompliance due to educational campaigns, respectively.}
\label{fig:params}
\end{figure}

\subsection{Scenario 1} The parameters for this scenario are listed in figure \ref{fig:params}, and the result of the simulation in the absence of controls is included in figure \ref{fig:Scenario1_Uncontrolled}. With no controls, the reproductive ratio for this scenario is $\mathscr R_0 = \tfrac b \delta \cdot \frac{\beta}{\gamma+\delta} = \frac{0.4}{0.21} \approx 1.9.$ We note in particular that $\xi = 0$, meaning that all newly introduced members of the population are compliant, and we are in the regime of theorem \ref{thm:DFEStability}(i),(ii). In this case, there are two DFEs, that, when controls are constant, take the forms $(s_1,s^*_1) = (\tfrac b \delta,0)$ and $(s_2,s_2^*) = (\tfrac{\nu+ \delta}{\mu^*-\mu }, \tfrac b \delta - \tfrac{\nu+\delta}{\mu^*-\mu})$. If the controls are constant, the second is stable in the case that it is physically meaningful (i.e. $\tfrac b \delta < \tfrac{\nu+\delta}{\mu^*-\mu}$); otherwise, the first is stable.  Clearly to minimize (C1) and (C2), it behooves the policy-maker to drive the system toward $(s_1,s_1^*) = (\tfrac b \delta,0)$. If control cost is no issue, the policy-maker can increase $\mu,\nu$ to ensure that $(s_1,s_1^*)$ is stable, and then decrease $\mathscr R_0 (\tfrac b \delta,0) = \tfrac b \delta \cdot \tfrac{\beta(1-\alpha)}{\gamma + \eta + \delta}$ using $\alpha, \eta$ so as to fall into the stability regime of theorem \ref{thm:DFEStability}(i). However, it will likely be too expensive in terms of (C3) to actually fully do this. In figure \ref{fig:Scenario1_balanced}, we see the optimal control plans for a ``balanced" policymaker: one who would like to minimize both infections and noncompliance. The disease dynamics are plotted in the top panels of figure \ref{fig:Scenario1_balanced}, split into compliant and noncompliant compartments. The bottom left panel of figure \ref{fig:Scenario1_balanced} displays the optimal control plans. Finally, the bottom right displays the reproductive ratio $\mathscr R_0(s,s^*)$ which, due to the controls, now changes with time. In particular, we plot the value of $\mathscr R_0(\tfrac b \delta,0)$ when $\tfrac b \delta < \frac{\nu + \delta}{\mu^* - \mu}$ to reflect the conditions of theorem \ref{thm:DFEStability}(i) and $\mathscr R_0(\tfrac{\nu+ \delta}{\mu^*-\mu }, \tfrac b \delta - \tfrac{\nu+\delta}{\mu^*-\mu})$ when $\tfrac b \delta \ge \frac{\nu + \delta}{\mu^* - \mu}$ to reflect the conditions of theorem \ref{thm:DFEStability}(ii). In either case, if the corresponding reproductive ratio was kept below 1, the corresponding DFE would be stable, but from the plot we see that it is not optimal to do so. Instead, controls $\mu$ and $\nu$ are used to ensure compliance in the early stages of the epidemic, so that $\alpha$ and $\eta$ can be effectively deployed to keep the reproductive ratio low, though a small outbreak is indeed allowed to occur. With these cost weights, the optimal controls reduce the total cost from $8.47$ in the uncontrolled case to $4.79$ in the optimally controlled case, representing a roughly $43.5\%$ relative reduction in cost due to controls. 

\begin{figure}[t!]
    \centering
    \includegraphics[width=0.72\linewidth]{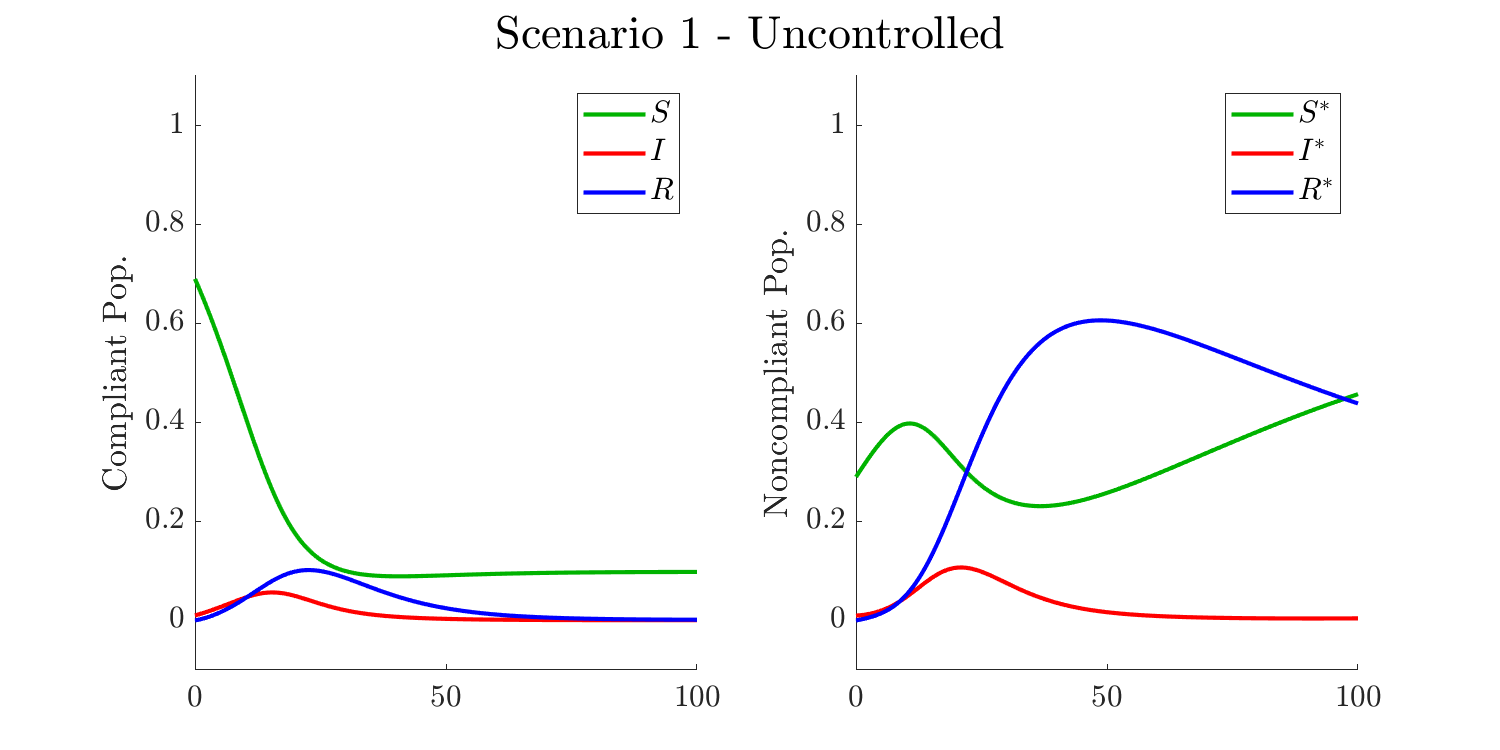}
    \caption{Simulation of Scenario 1 in the absence of controls.}
    \label{fig:Scenario1_Uncontrolled}
\end{figure}

\begin{figure}[t!]
\centering
\includegraphics[width=0.72\linewidth]{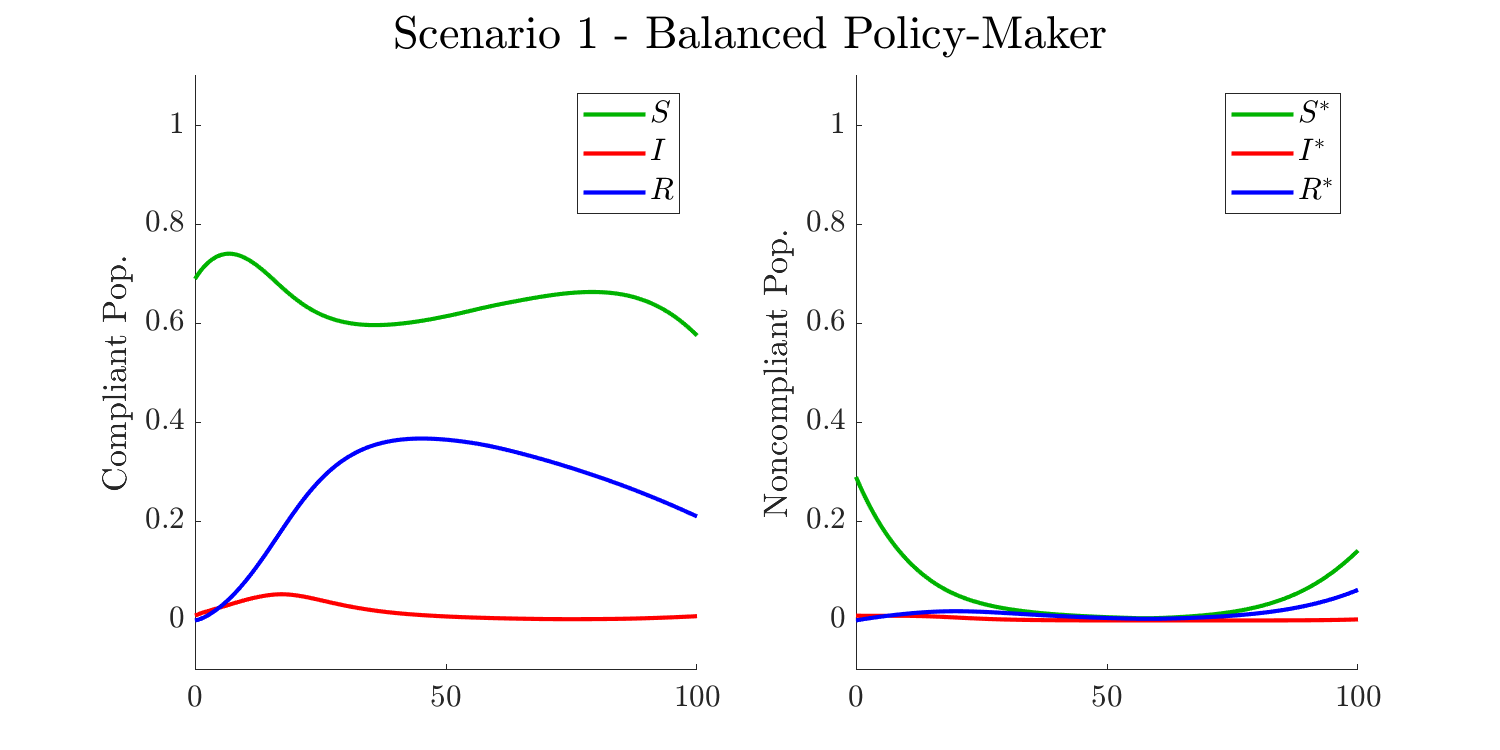}\\
\includegraphics[width=0.72\linewidth]{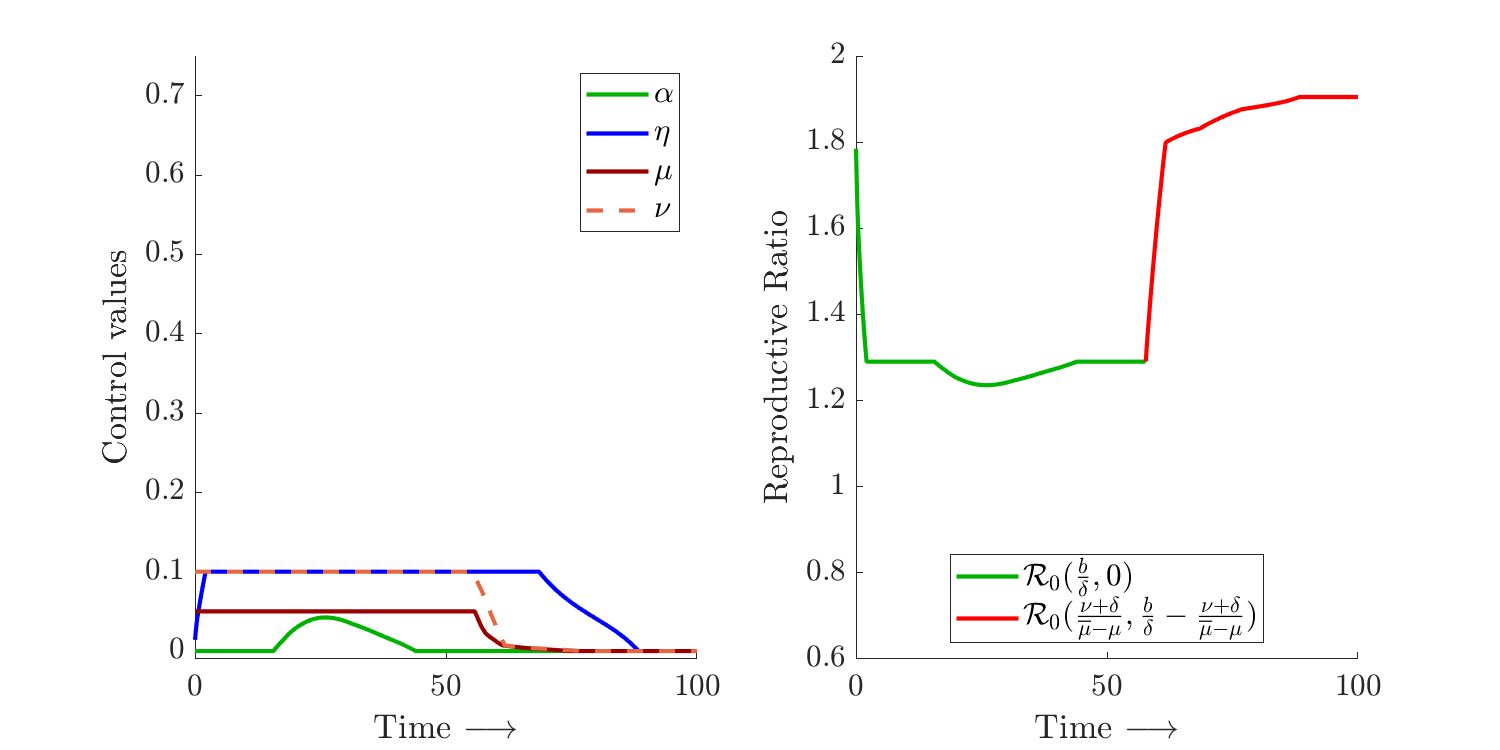}
    \caption{Simulation of Scenario 1 with a ``balanced" policy-maker: cost weights $c_1 = 1, c_2 = 0.05$, $d_1 = 0.1, d_2 = 0.1$. The strategy is to quell noncompliance using $\mu,\nu$ so as to keep the infection rate low. Since the majority of the population is compliant, $\alpha$ and $\nu$ can be deployed to effectively lower the reproductive ratio. However, it is not lowered so much that $\mathscr R_0(\tfrac b \delta,0) < 1.$ A small outbreak of the disease is allowed to occur. } 
    \label{fig:Scenario1_balanced}
\end{figure}

\begin{figure}[t!]
\centering
\includegraphics[width=0.72\linewidth]{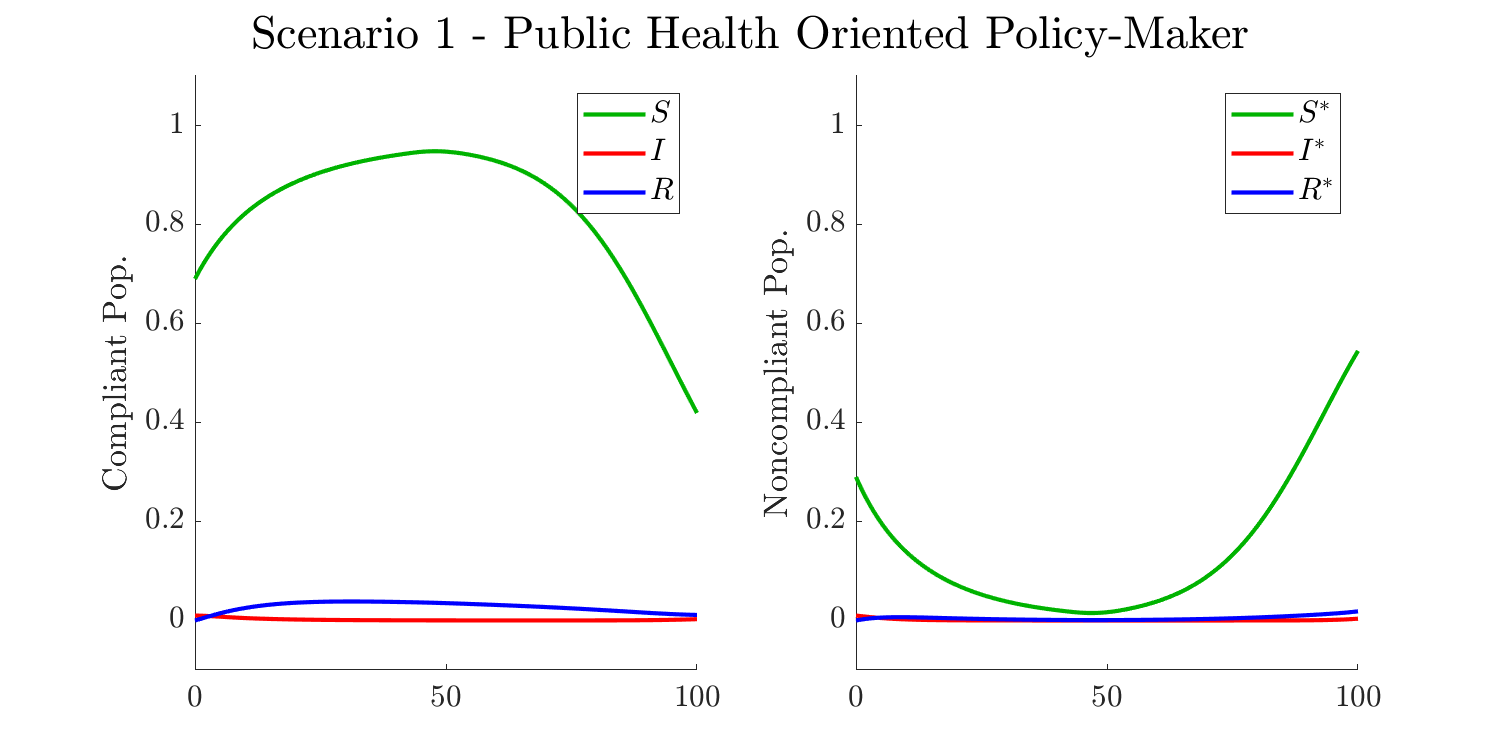}\\
\includegraphics[width=0.72\linewidth]{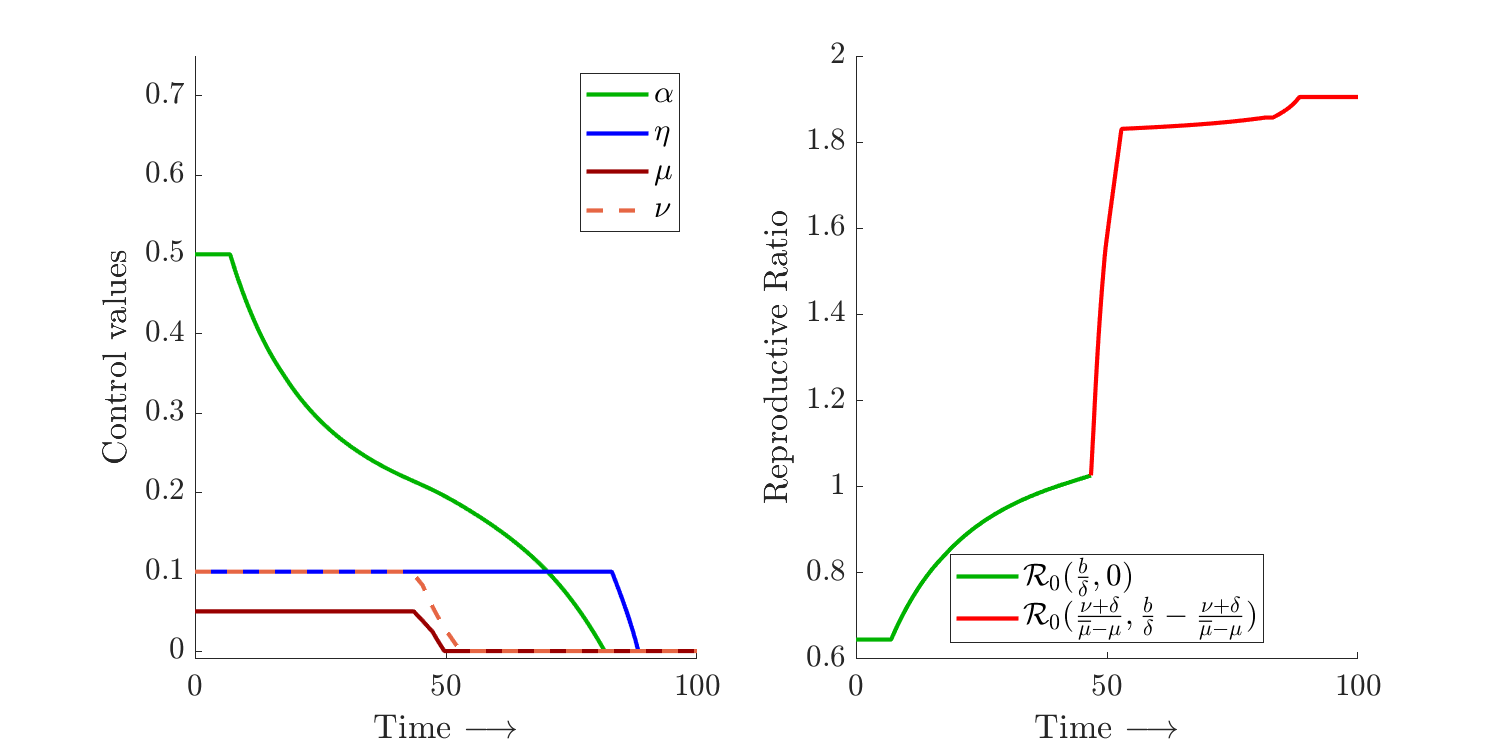}
    \caption{Simulation of Scenario 1 with a public health oriented policy-maker: cost weights $c_1 = 5, c_2 = 0, d_1 = 0.1, d_2 = 0.1$. In this case, even though there is no explicit cost associated with noncompliance, because the policy-maker desires to keep infections low, the optimal strategy employs $\mu$ and $\nu$ even more strongly than in the balanced case. The disease controls $\alpha$ and $\eta$ are also increased so as to drive $\mathscr R_0(\tfrac b \delta,0) < 1$ and take advantage of the stability of the DFE $(\tfrac b \delta,0)$ until the disease has been effectively eradicated.} 
    \label{fig:Scenario1_healthy}
\end{figure}

We contrast this with the optimal actions corresponding to cost weights $c_1 = 5, c_2 = 0$. This represents a public health oriented policy-maker whose primary concern is to stop the disease, and does not even assess cost based on noncompliance. The results of the simulation using these weights are in figure \ref{fig:Scenario1_healthy}. In this case, the cost weight on infections is high enough that it is indeed optimal to use controls $\mu,\nu$ to ensure that, in the initial stages of the epidemic, we remain in the stability regime for the disease-free equilibrium $(s_1,s_1^*) = (\tfrac b \delta,0)$, and then use $\alpha,\eta$ to drive $\mathscr R_0(\tfrac b \delta,0) < 1$, in essence taking advantage of theorem \ref{thm:DFEStability}(i). This is maintained until the disease has been effectively eradicated, at which time the policy-maker eases the controls (note that, in the context of this model, the policy-maker is aware of the horizon time $T = 100$, so $\alpha$ is tapered off such that infections will not significantly increase within this final time). Figure \ref{fig:Scenario1_healthy} demonstrates the need for synergy between the disease controls $\alpha,\eta$ and noncompliance controls $\mu,\nu$. With these cost weights, there is no explicit cost associated with increased noncompliance. However, because increased noncompliance accelerates disease spread, the policy-maker must still employ $\mu,\nu$ to keep noncompliance low so as to accomplish their goal of disease elimination. With these cost weights, the total cost is reduced from $22.60$ in the uncontrolled case to $8.62$ in the optimally controlled case, representing a roughly $61.9\%$ relative reduction in cost. 

To further explore the synergy between the controls, we include one final figure using the scenario 1 parameters, and the health oriented policy-maker with cost weights $c_1=5,c_2=0$. In this case, we demonstrate the effects of ``turning off" certain controls. The results are in figure \ref{fig:Scenario1_healthy_controlsoff}, where we display the effects on the optimal control maps of forcing $\eta = 0$ (top), or forcing $\mu = \nu = 0$ (bottom). Recall, $\eta$ represents treatment efforts that hasten recovery among the compliant infected population, so one could think of $\eta = 0$ as a scenario near the outset of an epidemic where treatment is unavailable or ineffective. In this case, the policy-maker's optimal strategy is to more strongly enforce NPIs (increase $\alpha$), while also using $\mu,\nu$ to ensure compliance. While they are not displayed, the dynamics in this case look, in essence, the same as those in the top panel of figure \ref{fig:Scenario1_healthy} in that the disease is effectively eradicated. However, without access to the control $\eta$, the total cost can only be reduced from $22.60$ in the uncontrolled case to $13.16$ in the optimally controlled case, representing only a roughly $41.8\%$ relative cost reduction (compared to a $61.9\%$ relative cost reduction when $\eta$ is available). Thus, in some sense, control efforts are only two-thirds as effective without $\eta$. 

\begin{figure}[t!]
\centering
\includegraphics[width=0.72\linewidth]{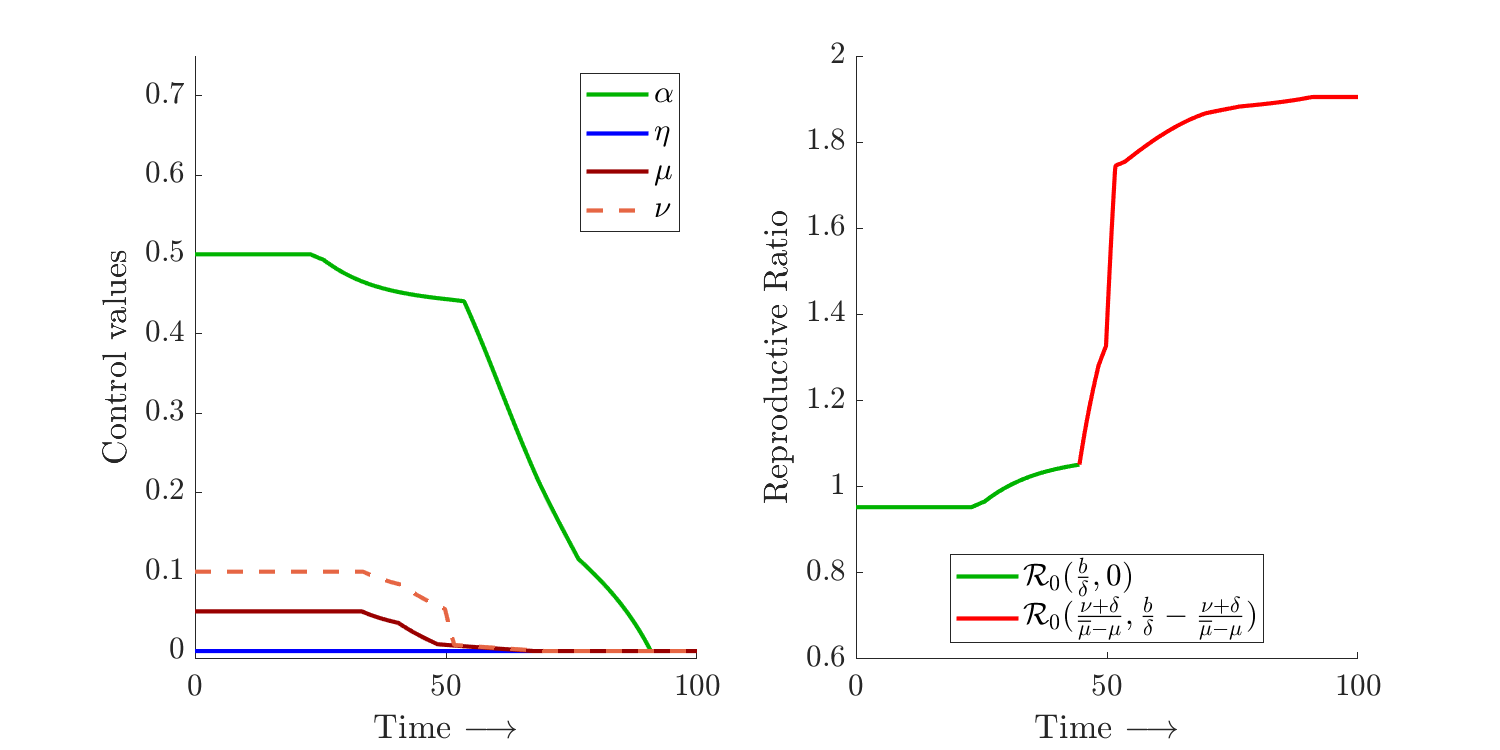}\\
\includegraphics[width=0.72\linewidth]{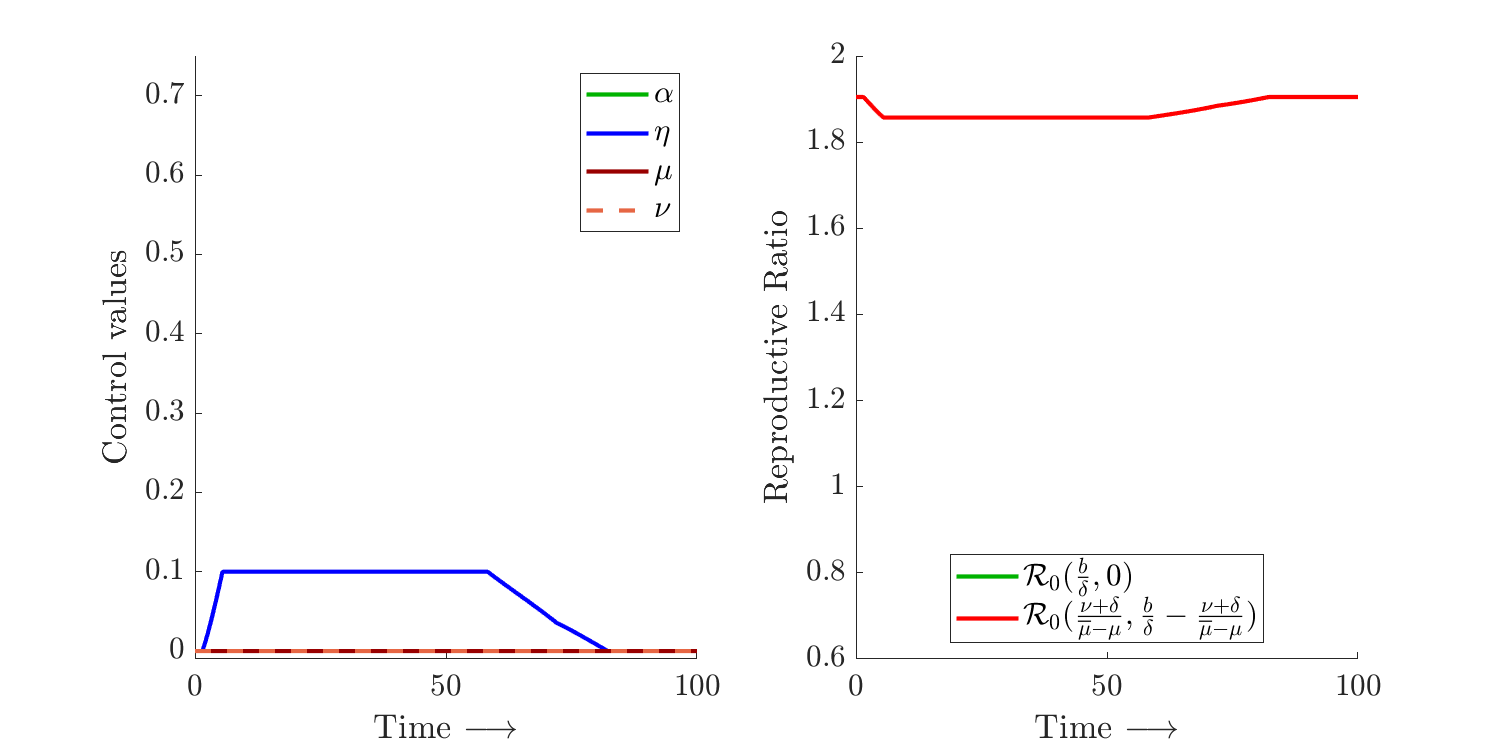}
    \caption{Simulation of Scenario 1 with a health oriented policy-maker: cost weights $c_1 = 5, c_2 = 0, d_1 = 0.1, d_2 = 0.1$, with either treatment $\eta$ unavailable (top), or control of noncompliance $\mu,\nu$ unavailable (bottom).} 
    \label{fig:Scenario1_healthy_controlsoff}
\end{figure}

In a similar vein, $\mu = \nu = 0$ represents a scenario wherein it is impossible to encourage compliance or slow the spread of noncompliance. In this case, the dynamics look in essence like those in figure \ref{fig:Scenario1_Uncontrolled} where no controls are used. One sees in the bottom panel of figure \ref{fig:Scenario1_healthy_controlsoff} that without the ability to curb noncompliance, it is still beneficial to slightly decrease the reproductive ratio among the compliant population using $\alpha$ and $\eta$, but much less so since the majority of the population will become noncompliant, hampering disease prevention efforts. With $\mu =\nu = 0$, one cannot land in the stability regime of theorem \ref{thm:DFEStability}(i), nor accomplish $\mathscr R_0(\frac{\nu+\delta}{\mu^* - \mu},\frac b \delta - \frac{\nu+\delta}{\mu^*-\mu})<1$, so the disease is simply allowed to run its course. The use of $\alpha$ and $\eta$ still accomplish a reduction of total cost from $22.60$ in the uncontrolled case to $20.31$ in the optimally controlled case, representing a roughly $10.1\%$ relative reduction in total cost (compared to the $61.9\%$ cost reduction when all controls are available). Meaning that, with $\mu=\nu=0$, control efforts are only about one-sixth as effective. 

Finally, for our last simulation in scenario 1, we demonstrate that our numerical method can handle the case of $L^1$ control costs by setting $d_2 = 0$, eliminating the $L^2$ costs from \eqref{eq:rl}. In this case, the Hamiltonian \eqref{eq:Hamiltonian} is affine in $\eta,\nu\mu$, while still being quadratic in $\alpha$, since the dynamics are quadratic in $\alpha$. Because an affine function on an interval maximized at one of the endpoints, in this case, $\eta,\mu,\nu$ should only ever take the value 0 or their maximum value; these are akin to bang-bang controls (though they could perhaps more accurately be called ``bang-no bang"). The control $\alpha$ does not have this structure and can still vary smoothly. Indeed, this is precisely what we see in figure~\ref{fig:Scenario1_balanced_L1}. This demonstrates that our methods are well-equipped to handle $L^1$ control problems. It also shows why allowing for controls in $BV([0,T];U_{\text{ad}})$ is far preferable to higher regularity spaces: controls which are discontinuous but have bounded variation readily appear in applications.

\begin{figure}[t!]
\centering
\includegraphics[width=0.72\linewidth]{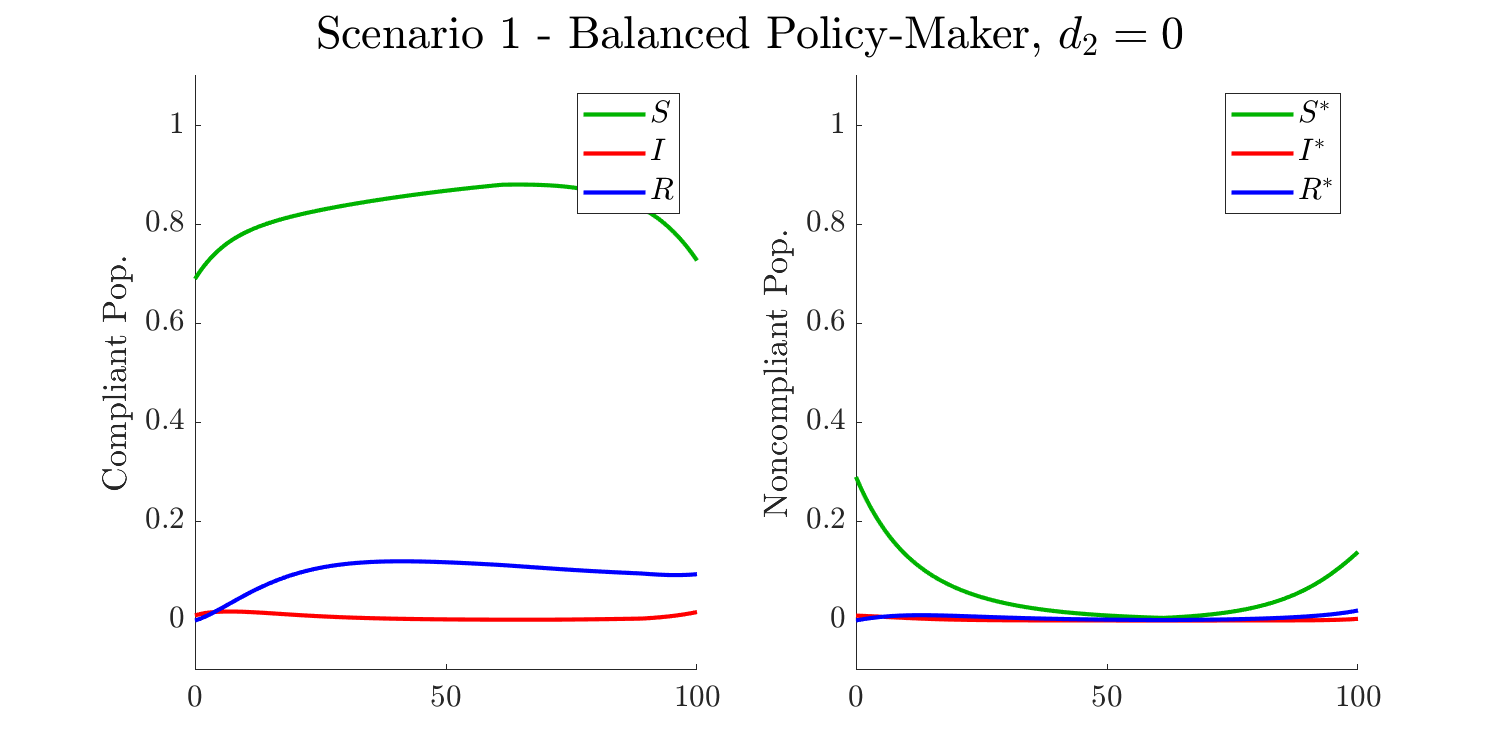}\\
\includegraphics[width=0.72\linewidth]{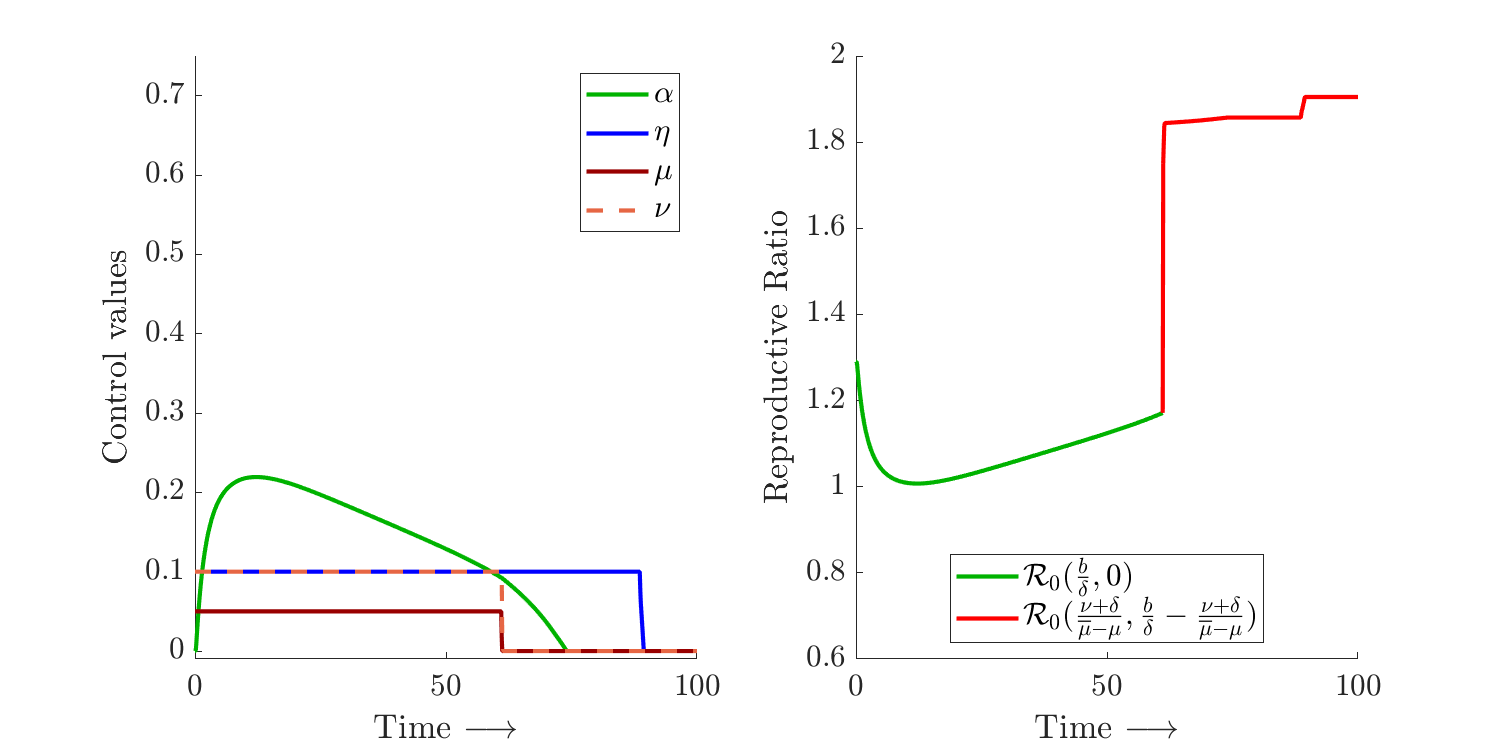}
    \caption{Simulation of Scenario 1 with a balanced policy-maker with no $L^2$ control costs: cost weights $c_1 = 1, c_2 = 0, d_1 = 0.1, d_2 = 0$. Theoretically, the controls $\eta,\mu,\nu$ should only take their maximum value or zero, and indeed this is borne out in the simulation. This demonstrates that our methods are well-equipped to handle $L^1$ control problems.} 
    \label{fig:Scenario1_balanced_L1}
\end{figure}

\subsection{Scenario 2} In this scenario, when compared with scenario 1, we increase the disease infection rate $\beta$ and the spread rate $\mu^*$ of noncompliance, and we also assume that $30\%$ of the newly introduced population members are noncompliant ($\xi = 0.3$), while all other parameters are unchanged. For this scenario, we set $c_2 = 0.01$, indicating a weak preference of the policy-maker to quell noncompliance for its own sake, though quelling noncompliance for the sake of slowing disease spread is still desirable. In figure \ref{fig:Scenario2}, we display the results of the simulation with $c_1$ successively taking values $1/3,1,3,6,9$. In this case, noncompliance is much more difficult to control since 30\% of the newly introduced population is noncompliant, and noncompliance spreads fairly quickly. When noncompliance is difficult to control, the control variables $\alpha$ and $\eta$ have a diminished effect. Accordingly, in this parameter regime, only a very strongly public-health oriented policy-maker will fully suppress disease spread. Note, for example, that the dynamics are qualitatively similar in the cases of $c_1 =1$ and $c_1 = 3$. It isn't until $c_1 = 6$ that the policy-maker significantly increases the use of the control $\alpha$ so as to more effectively stunt disease spread. This indicates that in scenarios where disease spread is fast enough, and noncompliance is rampant enough, a policy-maker who is even slightly economically minded (i.e. concerned with keeping control costs down) will let the disease run its course.

\begin{figure}[t!]
\centering
\includegraphics[width=0.85\linewidth]{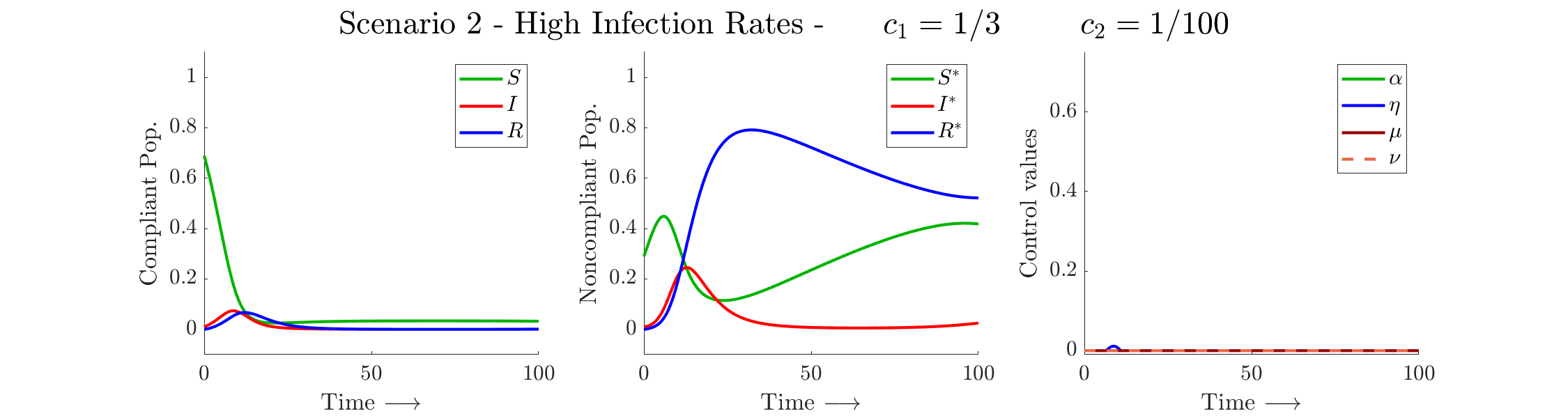}\\
\includegraphics[width=0.85\linewidth]{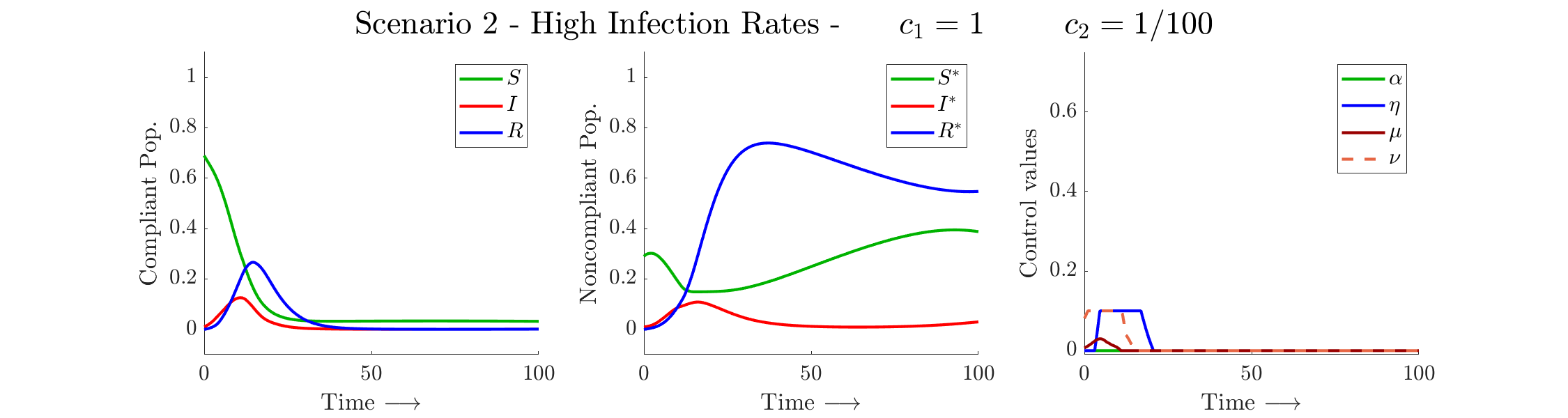}\\
\includegraphics[width=0.85\linewidth]{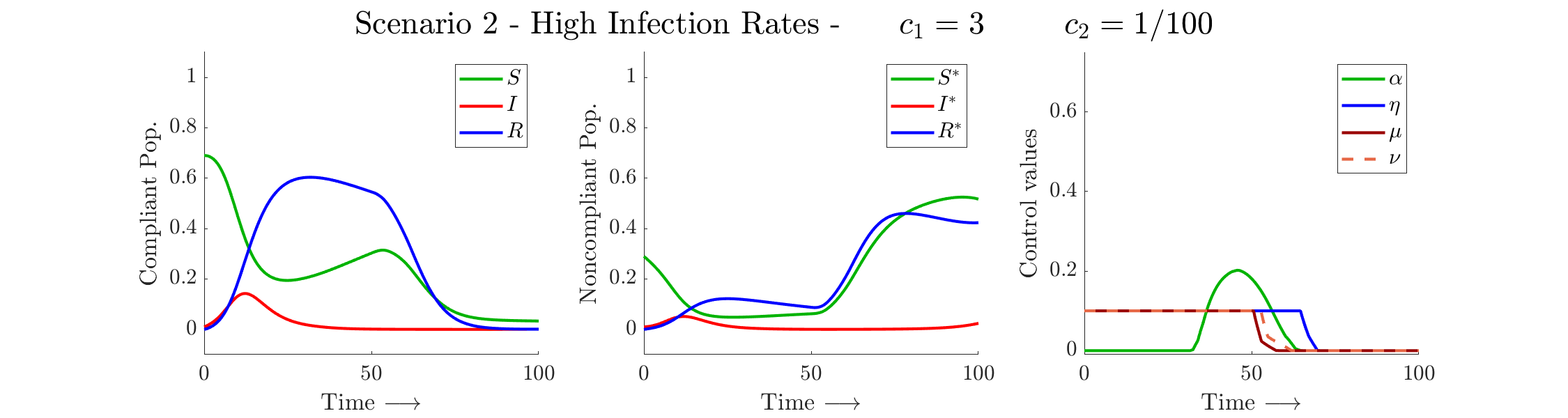}\\
\includegraphics[width=0.85\linewidth]{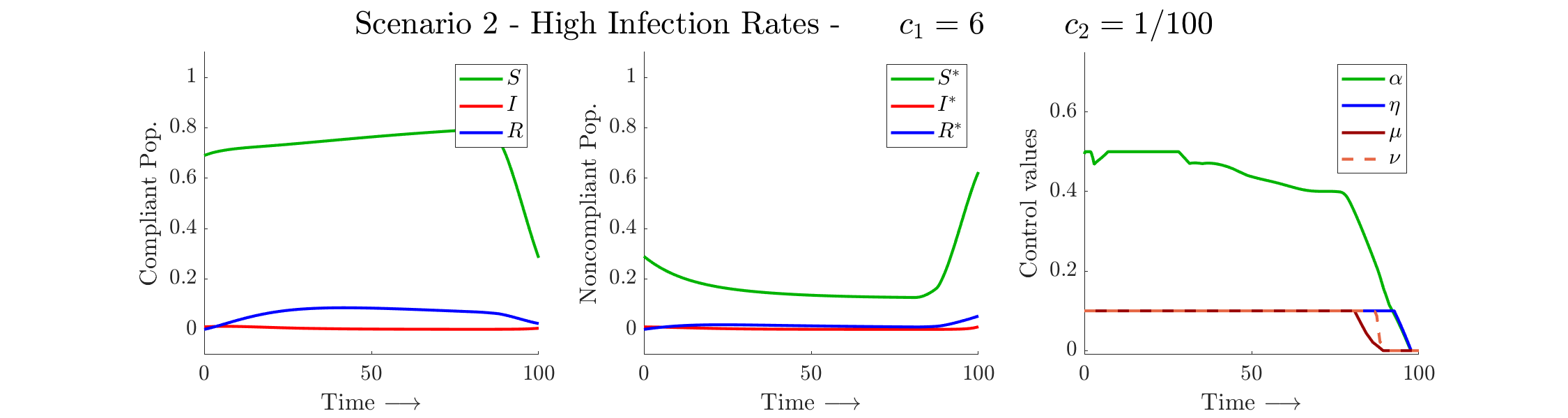}\\
\includegraphics[width=0.85\linewidth]{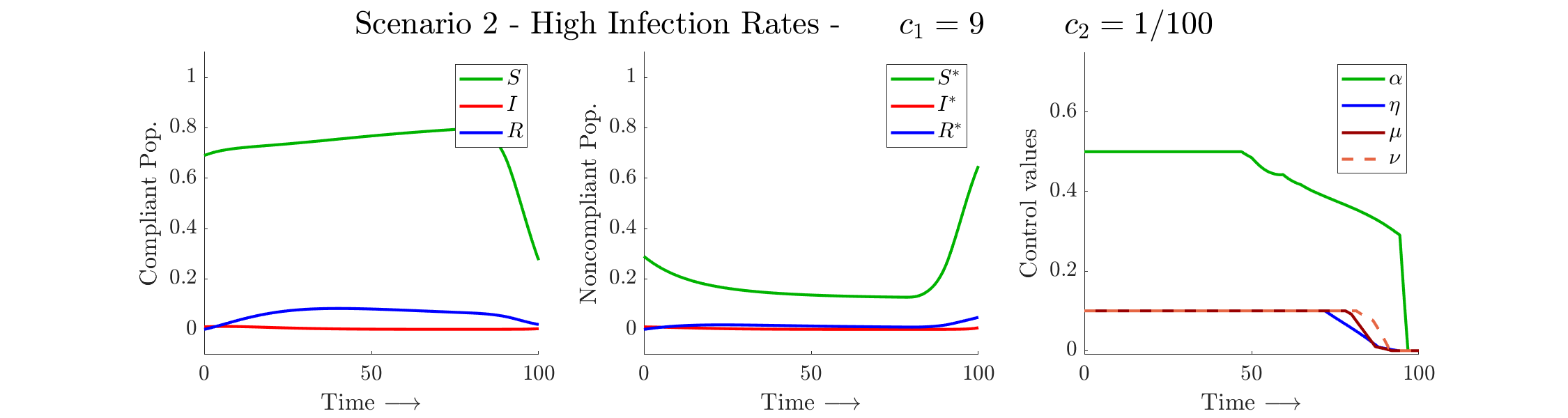}
    \caption{Simulation of Scenario 2, with increased spread of the disease and noncompliance. Noncompliance cost weight is $c_2 = 0.01$. From top to bottom, the cost weight on infections is $c_1 = 1/3, 1, 3, 6, 9$. Because disease spread is so fast, suppression efforts are costly and it is not optimal to fully suppress the disease until $c_1 \ge 6$. In a case like this, an economically oriented policy-maker would allow the disease to spread. } 
    \label{fig:Scenario2}
\end{figure}

\subsection{Scenario 3} Finally, in this scenario, we revert to the parameters of the baseline simulation in figure \ref{fig:Scenario1_balanced}, except that we increase the spread rate for noncompliance to $\mu^* = 0.3$ and observe the effects of changing $\xi$, the proportion of the newly introduced population that is noncompliant. In particular, we have the ``balanced" cost weights $c_1 = 1, c_2 = 0.05$. The full parameter list for scenario 3 is in figure \ref{fig:params}. In some sense, $\xi$ could be interpreted as the baseline proclivity toward noncompliance within the population. A key question is then the level at which this proclivity is high enough that it is no longer optimal to attempt to suppress noncompliance. We include four plots in figure \ref{fig:Scenario3} where $\xi$ takes successive values $0,1/4, 1/2, 1.$ When $\xi = 0$, even with the very high rate of noncompliance spread, it is optimal to significantly quell noncompliance in the early stages of the epidemic. This is seen, for example, in the top panels of figure \ref{fig:Scenario3}, where $\mu$ still takes its maximum value for at the beginning, and noncompliance stays relatively low throughout most of the time interval (we note again that it only grows at the end because the policy-maker is aware of the horizon time, and can save some cost by releasing some controls when there is no longer time for an outbreak to occur). This behavior persists for $\xi = 1/4$, but is contrasted with the ensuing plots when $\xi = 1/2,1$. In those cases, $\mu$ is almost identically zero. The results in the cases $\xi = 1/2$ and $\xi = 1$ are very similar, seeming to indicate that for values $\xi \ge 1/2$, controls are simply ineffective. We note also that $\alpha$ and $\nu$ decrease with increasing $\xi$, which is a result of the fact that disease control is less effective when control of noncompliance cannot be achieved. 

\begin{figure}[t!]
\centering
\includegraphics[width=0.85\linewidth]{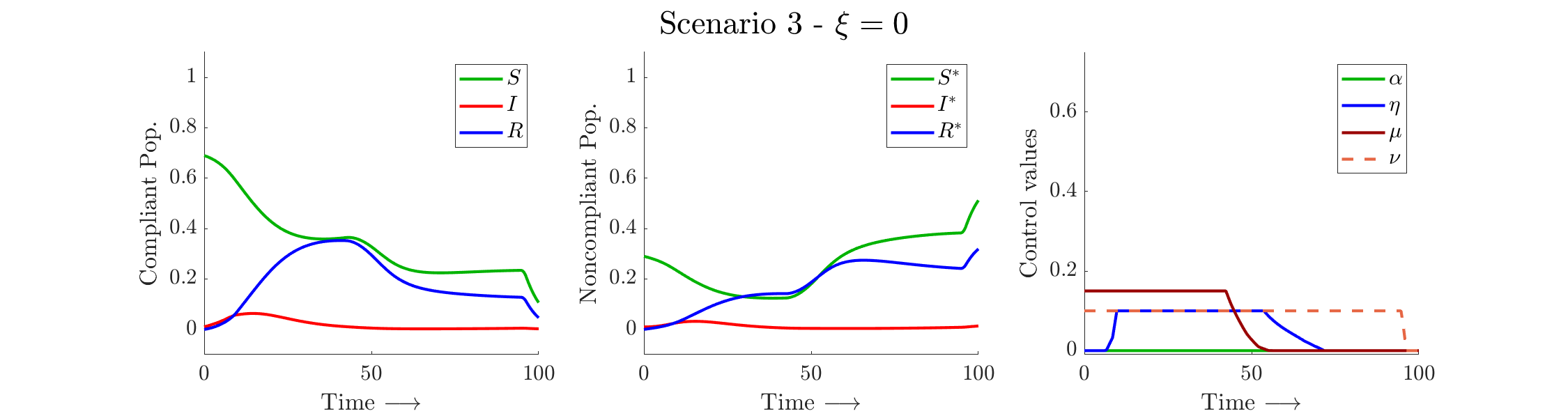}\\
\includegraphics[width=0.85\linewidth]{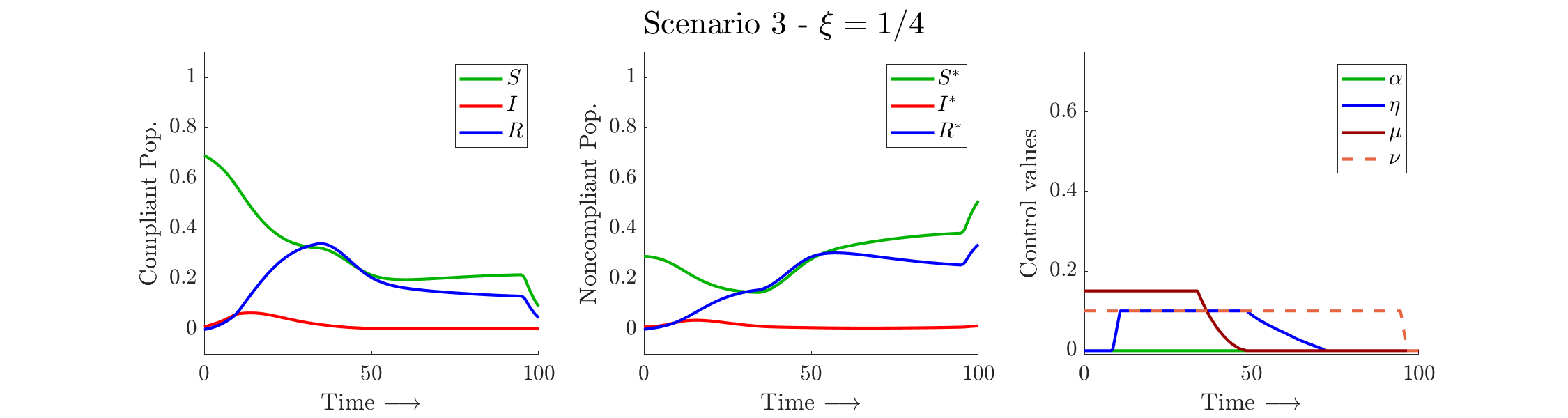}\\
\includegraphics[width=0.85\linewidth]{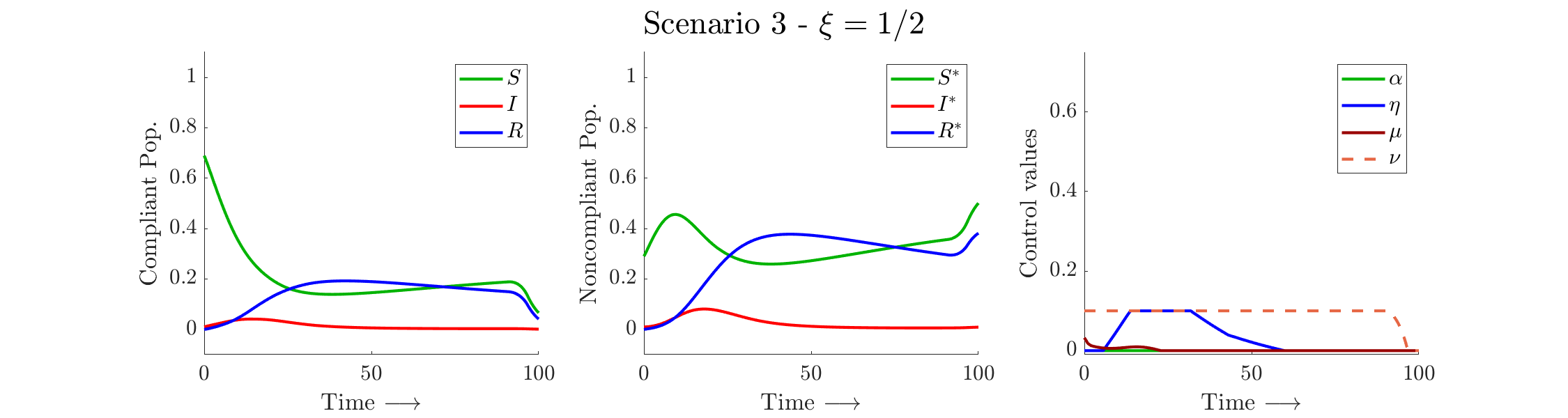}\\
\includegraphics[width=0.85\linewidth]{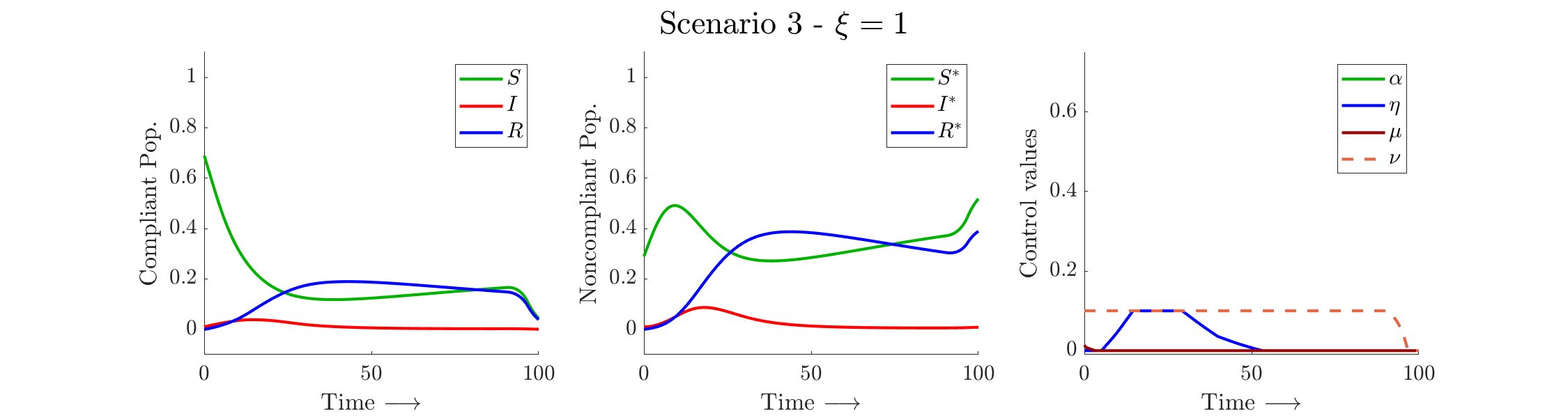}
    \caption{Simulation of Scenario 3, with increased spread rate of noncompliance, balanced cost weights, and variable proportion of the newly introduced population which is noncompliant $(\xi)$. Even with the increased spread of noncompliance, as long as $\xi = 0, 1/4$, the optimal strategy still employs the maximum available $\mu$ for some time to suppress spread of noncompliance. This becomes futile when $\xi \ge 1/2$, so that $\mu$ is essentially zero. There is very little change between the simulations using $\xi=1/2$ and $\xi=1$, indicating that by the time $\xi \ge 1/2$, a threshold has been crossed and controls are ineffective.} 
    \label{fig:Scenario3}
\end{figure}

\section{Conclusion \& Future Directions} \label{sec:conclusion}

In this paper, we present and analyze a compartmental SIR-style model for epidemiology, incorporating noncompliance with governmental NPIs as a social contagion. We provide stability analysis for the disease-free equilibria of the model, and then introduce four control variables that a policy-maker may have access to: (1) reduction in infectivity corresponding to the strength of the NPIs the policy-maker enacts, (2) increase in recovery rate for compliant individuals corresponding to increased treatment efforts, (3) reduction in spread of the social contagion of noncompliance corresponding to public health information campaigns, and (4) ``recovery" from noncompliance corresponding to educational campaigns aimed at increasing compliance. \WW{We prove existence of optimal controls for a reasonably general control structure, which allows for dynamics which are nonlinear in the control variables and $L^1$ control costs. We present and analyze the sequential quadratic Hamiltonian method for numerically resolving optimal controls and argue that this is preferable to other commonly used methods.} Finally, we demonstrate the behavior of our model, focusing specifically on the behavior of the optimal control maps in several parameter regimes that represent different physical scenarios and policy-maker preferences.

\WW{While this study is qualitative, it suggests several sociological takeaways. First, optimal policy depends heavily on both the epidemiological dynamics and the behavioral dynamics. Second, if we believe that human behavior enters into epidemiological models dynamically, then policies aimed toward large scale behavioral modification can be just as important as those aimed at slowing disease spread, and a synergistic approach to control will be more effective. Third, optimal policies are heavily influenced by the preferences of the policy-maker; policy-makers who weigh implementation costs versus public health outcomes differently will employ vastly different control strategies even under identical epidemiological conditions. And fourth, policies aimed at increasing adherence with NPIs have a positive (if indirect) effect on epidemic outcome, and ignoring these could lead to overly simplistic intervention strategies. In fact, interventions targeting noncompliance are justified solely through their downstream effects, even when the policy-maker assigns no direct cost to increased noncompliance. }

We suggest several avenues for future work. A first would be to incorporate stochastic effects to account for various uncertainties inherent to both epidemic spread of diseases and human behavior. This is partially addressed in \cite{PW3}, but not in the context of optimal control, and quantification of the effect that stochasticity has on the optimal control maps could be interesting. Another direction would be to synthesize these results with other modeling frameworks. For example, a multiplex network describing the coevolution of disease and opinion spread is presented in \cite{Peng}, but does not include optimal control of disease spread. Modeling controls in a similar manner to ours but in the network setting could help illuminate the ways in which heterogeneity of contact networks affects spread of the disease in the presence of noncompliant behavior. Next, inclusion of more finely modeled control variables could push models like this toward real-world utility. This could include things like the singular and/or one-shot vaccinations as in \cite{OneShot,MR3012899} respectively, limited quarantine availability in \cite{OC_limitQuarantine}, or control of information via media outlets like in \cite{OC_withMediaCoverage}. Incorporation of finely tuned controllers into models of epidemiology with human behavioral effects could lead to vastly increased modeling fidelity. \WW{Lastly, rather than modeling the spread of noncompliance via mass action, one could use the language of evolutionary game theory, wherein compliance and noncompliance are competing behavioral strategies which spread according to some measure of payoff-driven fitness. Careful considerations of the exact manners in which behaviors spread could lead to models which more faithfully represent the true dynamics. Likewise, more general behavioral modeling wherein compliance-noncompliance is a spectrum rather than a binary state may be more realistic, and the development of models with this philosophy is a promising path forward.} 

\section*{Acknowledgments}





Chloe Ngo was partially supported by the Honors Research Assistant Program (HRAP) of the University of Oklahoma. Weinan Wang was partially supported by the Simons Foundation grant. Both Chloe Ngo and Weinan Wang were partially supported by the Junior Faculty Fellowship from the University of Oklahoma.

\bigskip\noindent 
Chloe Ngo, Honors College, University of Oklahoma, Norman, OK, USA;
 e-mail: \url{chloe.ngo@ou.edu}

\bigskip\noindent 
Christian Parkinson, Department of Mathematics \& Department of Computational Mathematics, Science and Engineering, Michigan State University, East Lansing, MI,  USA;
e-mail: \url{chparkin@msu.edu}

\medskip\noindent 
Weinan Wang, Department of Mathematics, University of Oklahoma, Norman, OK, USA;
 e-mail: \url{ww@ou.edu}

\end{document}